\patchcmd{\section}{\scshape}{\bfseries}{}{}
\newcommand{\Z}{\mathbb{Z}}
\newcommand{\N}{\mathbb{N}}
\newcommand{\F}{\mathbb{F}}
\newcommand{\bP}{\mathbb{P}}
\newcommand{\cA}{\mathcal{A}}
\newcommand{\cC}{\mathcal{C}}
\newcommand{\cF}{\mathcal{F}}
\newcommand{\cG}{\mathcal{G}}
\newcommand{\cL}{\mathcal{L}}
\newcommand{\cM}{\mathcal{M}}
\newcommand{\cS}{\mathcal{S}}
\newcommand{\ol}[1]{\overline{#1}}
\newcommand{\Fqn}{\F_{q^n}}
\newcommand{\Gkn}{\cG_q(k,n)}
\newcommand{\ds}{\textup{d}_{\rm{s}}}
\renewcommand{\d}{\textup{d}}
\newcommand{\Mod}[1]{\ \left(\mathrm{mod}\ #1\right)}
\newcommand{\sm}{\setminus}
\newtheorem{theorem}{Theorem}[section]
\newtheorem{proposition}[theorem]{Proposition}
\newtheorem{lemma}[theorem]{Lemma}
\newtheorem{corollary}[theorem]{Corollary}
\theoremstyle{definition}
\newtheorem{example}[theorem]{Example}
\newtheorem{definition}[theorem]{Definition}
\newtheorem{rem}[theorem]{Remark}
\theoremstyle{remark}
\numberwithin{equation}{section}
\newcounter{alp}
\newcounter{ara}
\newcounter{rom}
\newenvironment{alphalist}{\begin{list}{(\alph{alp})\hfill}{\usecounter{alp}
			\topsep0ex \labelwidth.6cm \leftmargin.6cm \labelsep0cm
			\rightmargin0cm \parsep0ex \itemsep0ex
			\partopsep0ex}}{\end{list}}
\newenvironment{arabiclist}{\begin{list}{(\arabic{ara})\hfill}{\usecounter{ara}
			\topsep0ex \labelwidth.7cm \leftmargin.7cm \labelsep0cm
			\rightmargin0cm \parsep0ex \itemsep0ex
			\partopsep1.6ex}}{\end{list}}
\DeclareMathOperator{\Orb}{Orb}
\DeclareMathOperator{\Stab}{Stab}
\DeclareMathOperator{\lcm}{lcm}
\begin{document}
		
	\title{Distance Distributions of Cyclic Orbit Codes }
	\author{Heide Gluesing-Luerssen$^\ast$ and Hunter Lehmann\footnote{HGL was partially supported by the grant \#422479 from the Simons Foundation.
  HGL and HL are with the Department of Mathematics, University of Kentucky, Lexington KY 40506-0027, USA;
\{heide.gl, hunter.lehmann\}@uky.edu.}}

	\date\today
	\maketitle
	
	\begin{abstract}\label{sec:Abstract}
		The distance distribution of a code is the vector whose $i^\text{th}$ entry is the number of pairs of codewords with distance $i$.
		We investigate the structure of the distance distribution for cyclic orbit codes, which are subspace codes generated by the action of $\F_{q^n}^*$ on an $\F_q$-subspace $U$ of $\F_{q^n}$.
		We show that for optimal full-length orbit codes the distance distribution depends only on $q,\,n$, and the dimension of~$U$.
		For full-length orbit codes with lower minimum distance, we provide partial results towards a characterization of the distance distribution, especially in the case that any two codewords
		intersect in a space of dimension at most 2.
		Finally, we briefly address the distance distribution of a union of optimal full-length orbit codes.
	\end{abstract}
	
	\section{Introduction}\label{sec:Introduction}
	
	Following the seminal work of K\"{o}tter and Kschischang \cite{KoKsch08} in 2008, there have been a variety of lines of research on subspace codes and their
	applications to random network coding.
	Two major directions stand out: attempts to maximize the size of a subspace code given a fixed ambient space and minimum distance, and attempts to find algebraic constructions
    of subspace codes with best possible minimum distances, see
    \cite{HKK14,SiTr15,GLT16,HKKW16,HeKu17,CKMP19} for some of the more recent papers as well as the monograph~\cite{GPSV18} on network coding and subspace designs.
	One class of subspace codes that have attracted particular interest are cyclic orbit codes \cite{EtVa11,TMBR13,GLMT15,BEGR16,OtOz17,ChenLiu18,ZhTa19}
	due to their algebraic structure and efficient encoding/decoding algorithms.
	
	In this paper, we are interested in further classifying cyclic orbit codes of a fixed size and minimum distance using the finer invariant of the distance distribution.
	The latter encodes, for any possible subspace distance, the number of codeword pairs with that distance.
	It can thus detect subspace codes with the fewest number of codeword pairs attaining the minimum distance.
	Such codes may be regarded as superior to those with the same minimum distance but with more codeword pairs attaining that distance.
	
	A cyclic orbit code is a subspace code of the form $\Orb(U)=\{\alpha U\mid \alpha\in\F_{q^n}^*\}$, where $U$ is an $\F_q$-subspace of the field extension $\F_{q^n}$.
	In particular it is a constant-dimension code, that is, all subspaces in the code have the same dimension, namely $k:=\dim(U)$.
	It is well known that if $\Orb(U)$ has maximum possible distance, i.e.~$2k$, then~$k$ is a divisor of~$n$ and $U$ is a shift of the subfield $\F_{q^k}$.
	These codes are known as spread codes and their distance distribution is trivial because all subspaces intersect pairwise trivially.
	Their downside is their small size: they contain only $(q^n-1)/(q^k-1)$ codewords, which is the smallest size of any cyclic orbit code generated by a $k$-dimensional subspace.
	On the other hand, the largest size of such a code is $(q^n-1)/(q-1)$, and codes attaining this size will be called full-length orbit codes.
	Full-length orbit codes with distance $2k-2$, which is the best possible, will be called optimal full-length orbit codes.
	Hence optimal full-length orbit codes maximize the size of the code as well as the distance (as long as the latter is less than $2k$).
	
	Over the last few years, several different constructions of optimal full-length orbit codes have been found \cite{BEGR16,ChenLiu18,OtOz17}.
	In~2018, Roth, Raviv, and Tamo~\cite{RRT18} showed that all of these codes are generated by subspaces known as Sidon spaces.
	Our first major result,~\cref{theo:SidonCount}, shows that the distance distribution of optimal full-length orbit codes is fully determined by the parameters $q,\,n,$ and $k$,
	regardless of the choice of Sidon space.
	In deriving this result, another interesting parameter arises, namely the number, $f(U)$, of fractions inside the field $\F_{q^n}$ that can be obtained from elements of the given subspace~$U$
	(up to factors from~$\F_q$).
	For Sidon spaces this number is fully determined by $q,\,n,\,k$.
	Furthermore, we provide the minimum and maximum possible value of $f(U)$ over all $k$-dimensional subspaces and show that
	$f(U)$ is minimal iff $\Orb(U)$ is a spread code and maximal iff $\Orb(U)$ is an optimal full-length orbit code.
	
	In \cref{sec:GeneralOrbits}, we investigate the distance distribution of full-length orbit codes  with distance less than $2k-2$.
	In this case, the distance distribution is -- unsurprisingly -- not fully determined by $q,\,n,\,k$ and the distance.
	In \cref{thm:Dist2k4L2Structure} we describe the distance distribution as closely as possible for the case where the distance is $2k-4$.
	It involves, in addition to $q,\,n,$ and $k$, a further parameter~$r$, whose meaning will become clear in  \cref{sec:GeneralOrbits}.
	Various examples illustrate possible values of this parameter, but more work is needed to find its exact range or at least bounds.
	Alternatively, the distance distribution is fully determined by $q,\,n,\,k$ and the above mentioned parameter $f(U)$.
	However, we do not yet understand what values $f(U)$ may take or how to design subspaces with a particular value.
	
	Finally, in \cref{sec:UnionOfOrbits} we consider codes that are the union of optimal  full-length orbit codes.
	Constructions of such codes can be found in \cite{RRT18}.
	We show that \cref{theo:SidonCount}  generalizes straightforwardly to this scenario, that is, the distance distribution is fully determined by $q,\,n,\,k$, and the number of orbits in the union.
	
	Throughout the paper, we will in fact study the intersection distribution rather than the distance distribution.
	That is, we count the number of codeword pairs whose intersection attains a given dimension.
	Thanks to the definition of the subspace distance in~\eqref{eq:ds} this is clearly equivalent to studying the distance distribution.
	
	\section{Preliminaries}\label{sec:Preliminaries}
	
	We begin by recalling some basic facts about subspace codes and cyclic orbit codes.
	Throughout we fix a finite field~$\F_q$.
	A \emph{subspace code (of block length~$n$)} is simply a collection of subspaces in~$\F_q^n$ with at least two elements.
	The code is called a \emph{constant-dimension code} if all subspaces have the same dimension.
	The \emph{distance between two subspaces} $V,W\subseteq\F_q^n$ is defined as
	\begin{equation}\label{eq:ds}
	\d(V,W):=\dim V+\dim W-2\dim(V\cap W)
	\end{equation}
	and the \emph{subspace distance} of a subspace code~$\cC$ is
	$\ds(\cC):=\min\{\d(V,W)\mid V,\,W\in\cC,\,V\neq W\}$.
	
	Cyclic orbit codes are most conveniently defined in the field extension $\Fqn$, considered as an $n$-dimensional $\F_q$-vector space.
	Let $\Gkn$ be the Grassmannian of $k$-dimensional $\F_q$-subspaces of $\F_{q^n}$.
	Then $\F_{q^n}^*$ induces a group action on $\Gkn$ via $(\alpha,U)\longmapsto \alpha U$, where
	$\alpha U=\{\alpha u\mid u\in U\}$, which of course is a subspace in $\Gkn$.
   A constant-dimension codes in $\Gkn$ is called a \emph{cyclic subspace code} if it is invariant under this group action.
    Hence cyclic subspace codes are unions of orbits under this action.
    Throughout most of this paper we will study codes that form a single orbit and only in \cref{sec:UnionOfOrbits} turn to more general cyclic subspace codes.
	We fix the terminology of cyclic orbit codes and list some properties in the next definition.
	Further details can be found in~\cite{GLMT15}.
	
	\begin{definition}\label{def:COC}
		Let $U\in\Gkn$.
		The \emph{cyclic orbit code generated by~$U$} is the subspace code $\Orb(U):=\{\alpha U \mid \alpha\in\Fqn^* \}$.
		It is a constant-dimension code of dimension~$k$.
		We define the \emph{stabilizer} of~$U$ in the obvious way as $\Stab(U):=\{\alpha\in\Fqn^*\mid \alpha U=U\}$.
		It is easy to see that $\Stab(U)=\F_{q^t}^*$ for some $t\in\N$ (which is a divisor of~$\gcd(k,n)$).
		In fact, the field~$\F_{q^t}$ is the largest subfield of $\Fqn$ over which~$U$ is a vector space, i.e.,~$U$ is closed under multiplication by scalars in~$\F_{q^t}$.
		The orbit-stabilizer theorem tells us that if $\Stab(U)=\F_{q^t}^*$ then $|\Orb(U)|=(q^n-1)/(q^t-1)$.
		If $t=1$, we call $\Orb(U)$ a \emph{full-length orbit code}.
	\end{definition}
	
	Taking the stabilizer into account, we can give a more concise description of the orbit.
	In order to do so, we make the following definition.
	It appeared first in \cite[Def.~4]{BEGR16}.
	
	\begin{definition}\label{def:SimFt}
		Let $t\in\N$ be a divisor of~$n$. On $\Fqn^*$ we define the equivalence relation
		\[
		\alpha\sim_t\beta \Longleftrightarrow \frac{\alpha}{\beta}\in \F_{q^t}.
		\]
		Note that the right hand side is equivalent to $ \alpha\F_{q^t}^*=\beta\F_{q^t}^*$.
		We set $\mbox{$\F_{q^n}^*/_{\sim_t}$} =\bP_t(\Fqn)$, the \emph{projective space} over the $\F_{q^t}$-vector space $\F_{q^n}$.
		Clearly $|\bP_t(\F_{q^n})|=\frac{q^n-1}{q^t-1}$.
		The equivalence class of~$\alpha\in\Fqn^*$ is denoted by $\ol{\alpha}^{(t)}$.
		For $t=1$ we omit the subscript/superscript; thus $\alpha\sim\beta \Longleftrightarrow \alpha\beta^{-1}\in \F_q$ and
		$\bP(\Fqn)=\bP_1(\Fqn)=\{\ol{\alpha}\mid \alpha\in\Fqn\}$.
	\end{definition}
	
	Note that $\ol{\alpha}^{(t)}=\alpha\F_{q^t}^*$ and the projective space $\bP_t(\Fqn)$ is actually the cyclic orbit code generated by
	$\F_{q^t}$ if we add the zero vector to every equivalence class $\alpha\F_{q^t}^*$.

	\begin{rem}\label{rem:SizeOrb}
		Let $U\in\Gkn$ and $\Stab(U)=\F_{q^t}^*$. Then the map
		\[
		\bP_t(\F_{q^n})\longrightarrow\Orb(U),\ \ol{\alpha}^{(t)}\longmapsto \alpha U
		\]
		is a well-defined bijection.
	\end{rem}
	
	Let us now turn to the minimum distance and the distance distribution of a cyclic orbit code.
	Fix a subspace $U\in\Gkn$ and let $\Stab(U)=\F_{q^t}^*$.
	By the very definition of the subspace distance in~\eqref{eq:ds} we have
	$\d(\beta U,\alpha U)=\d(U,\alpha\beta^{-1}U)$ for all $\alpha,\beta\in\Fqn^*$.
	Furthermore, $\d(U,\alpha U)=2k-2\dim(U\cap\alpha U)$ for any $\alpha\in\Fqn^*$.
	This implies
	\begin{align*}
	\ds(\Orb(U))&=\min\{\d(U,\alpha U)\mid \alpha\in\Fqn^*,\,\alpha U\neq U\}\\
	     &=2k-2\max\{\dim(U\cap\alpha U)\mid \alpha\in\Fqn^*,\,\alpha U\neq U\}.
	\end{align*}
	
	In this paper we will study the distance distribution of cyclic orbit codes.
	Without loss of generality we may restrict ourselves to the case where $2k\leq n$.
    Indeed, because $\d(V^\perp, W^\perp)=\d(V,W)$, where $V^\perp$ denotes the orthogonal complement of $V$ with respect to the standard dot product, a subspace code
    and its dual have the same distance distribution.
	For the following definition recall that $\d(U,\alpha U)$ is always even and that
	$\d(U,\alpha U)=2i\Longleftrightarrow \dim(U\cap \alpha U)=k-i$.
	Furthermore, we projectify simply over the scalar field~$\F_q$ as this does not require knowledge of the stabilizer.
	Of course, $\Orb(U)=\{\alpha U\mid \ol{\alpha}\in\bP(\F_{q^n})\}$ and $\alpha U=\beta U$ iff $\ol{\alpha}^{(t)}=\ol{\beta}^{(t)}$.

	\begin{definition}\label{def:DistDistr}
		Let $k\leq n/2$ and $U\in\Gkn$. Suppose $\ds(\Orb(U))=2d$.
        Set $\ell=k-d$, thus $\ds(\Orb(U))=2k-2\ell$ and $\ell=\max\{\dim(U\cap\alpha U)\mid a\in\F_{q^n}^*,\,U\neq\alpha U\}$.
        We call~$\ell$ the \emph{maximum intersection dimension of} $\Orb(U)$.
		For $i=d,\ldots,k$ define $\delta_{2i}=|\{\alpha U\mid \d(U,\alpha U)=2i\}|$.
		Then $(\delta_{2d},\ldots,\delta_{2k})$ is the \emph{distance distribution} of $\Orb(U)$.	
		For $i=0,\ldots,\ell$ we define $\lambda_i=|\cL_i|$, where
		\[
		\cL_i=\cL_i(U)=\{\ol{\alpha}\in\bP(\F_{q^n})\mid \dim(U\cap\alpha U)=i\},
		\]
		and call
		$(\lambda_0,\ldots,\lambda_\ell)$ the \emph{intersection distribution} of $\Orb(U)$.
		Suppose $\Stab(U)=\F_{q^t}^*$.
        Then $\lambda_i=(q^t-1)/(q-1)\delta_{2(k-i)}$ for $i=0,\ldots,\ell$ and
		$\sum_{i=d}^k\delta_{2i}=\sum_{i=0}^\ell(q-1)/(q^t-1)\lambda_i=|\Orb(U)\setminus\{U\}|=(q^n-1)/(q^t-1)-1$.
	\end{definition}
	
	A few comments are in order.
	First of all, in the distance distribution we only count the distances to the ``reference space''~$U$.
	This may be regarded as the analogue of the weight distribution of a linear block code where only the distances to the zero vector are counted as
	opposed to all pairwise distances.
	The complete number of pairs $(\beta U,\alpha U)$ such that $\d(\beta U,\alpha U)=2i$ is then $(q^n-1)(q^t-1)^{-1} \delta_{2i}$.
	Secondly, for the intersection distribution we count each single subspace~$\alpha U$ with a multiplicity $(q^t-1)(q-1)^{-1}=|\bP(\F_{q^t})|$,
    thus the factor relating~$\delta_{2(k-i)}$ and~$\lambda_i$.
    All of this shows that the intersection distribution, as defined above, fully determines the distance distribution.
    From now on we will study the intersection distribution.
    Accordingly, instead of the subspace distance $\ds(\Orb(U))=2d$ we will specify the parameter $\ell=k-d$, which we call the ``maximum intersection dimension''

    We collect a few properties and special cases in the following remark.

	\begin{rem}\label{rem:tmult}
		Consider the situation of \cref{def:DistDistr} where $\Stab(U)=\F_{q^t}^*$.	
       Then~$U$ and all its cyclic shifts are vector spaces over~$\F_{q^t}$, and hence $t\mid k$.
		As a consequence, $\d(U,\alpha U)=2k-2\dim(U\cap \alpha U)$ is a multiple of~$t$ for all $\alpha\in\Fqn$ and $\delta_j=0$ if $j\not\in r\Z$, where $r=\lcm(2,t)$.
		For the same reason $\cL_i=\emptyset$ if $t\nmid i$. This also implies that $t\mid\ell$, thus either $\ell\geq t$ or $\ell=0$.
		If $\ell=0$, then all subspaces of the orbit code intersect trivially, and thus their union consists of $(q^n-1)(q^t-1)^{-1}(q^k-1) +1$ elements.
		Since this number can be at most $q^n$, we conclude that $t=k$. Thus we have the following scenarios:
		\begin{alphalist}
			\item If $\ds(\Orb(U))=2k$, then $\Stab(U)=\F_{q^k}^*$ and thus $U=a\F_{q^k}$ for some $a\in\Fqn$.
            This is the best distance a cyclic orbit code can have, but comes at the cost of the length, which is just $(q^n-1)/(q^k-1)$, the
		shortest possible among all cyclic orbit codes of dimension~$k$.
		The code is a spread code, i.e., all subspaces intersect pairwise trivially and their union is the entire space.
            The intersection distribution is simply given by $\lambda_0=(q^n-q^k)/(q-1)$.
			\item If $\ds(\Orb(U))<2k$, then we even have the upper bound $\ds(\Orb(U))\leq 2(k-t)$.
			In particular, if $t=1$, the code is a full-length orbit, i.e., has maximal possible length $(q^n-1)/(q-1)$, and its distance is at most $2k-2$.
			Later in \cref{theo:SidonCount} we will see that all  full-length orbits with distance $2k-2$ have the same intersection distribution
			$(\lambda_0,\lambda_1)$.
		\end{alphalist}
	\end{rem}

	Part~(a) tells us in particular that the intersection distribution of $\Orb(U)$ is fully determined for any $1$-dimensional subspace~$U$.
	Therefore, we may restrict ourselves to $k\geq 2$.
    The following class of orbit codes will be at the focus of the next section.

   \begin{definition}\label{def:OptFL}
   A full-length orbit code with distance $2k-2$ is called an \emph{optimal full-length orbit}.
   \end{definition}

	\section{\mbox{}\!\!\!\!The Intersection Distribution of Optimal Full-Length Orbit Codes}\label{sec:fractions}
	We fix $k,n\in\N$ such that $2\leq k\leq n/2$.
	We introduce some crucial parameters associated with a given subspace.
	They will be needed later to study the intersection distribution of cyclic orbit codes.

	\begin{definition}\label{def:UAssoc}
		Let $U\in\Gkn$ and $\ds(\Orb(U))=2k-2\ell$, thus~$\ell$ is the maximum intersection dimension of $\Orb(U)$.
		We define
		\begin{arabiclist}
			\item $\cL=\cL(U)=\bigcup_{i=1}^\ell\cL_i$, where $\cL_i$ is as in \cref{def:DistDistr}.
			\item $\cS=\cS(U)=\{\ol{\alpha}\in\bP(\Fqn)\mid \alpha\in\Stab(U)\}$. 
			\item $\cM=\cM(U)=\{(\ol{u},\ol{v})\mid u,v\in U\setminus 0,\,\ol{u}\neq\ol{v}\}$.
			\item $\cF=\cF(U)=\{ \ol{uv^{-1}} \mid u,v \in U \sm0 \}$ and $f:=|\cF|$. We call~$\cF$ the \emph{set of fractions} of~$U$.
		\end{arabiclist}
       If $\Stab(U)=\F_{q^t}^*$, we have $s:=|\cS|=(q^t-1)/(q-1)$.
	\end{definition}
	
	Note that $\cL=\{\ol{\alpha}\mid 0\neq U\cap\alpha U\neq U\}$.
	In particular $\cL\cap\cS=\emptyset$.
	Recall the intersection distribution $(\lambda_0,\ldots,\lambda_\ell)$, where $\lambda_i=|\cL_i|$.
	Since $\lambda_0$ is fully determined by $(\lambda_1,\ldots,\lambda_\ell)$, it suffices to study the sets $\cL_1,\ldots,\cL_\ell$.
	Hence we omit $\cL_0$ in the union $\cL$.
	As for part (3) above, note that for nonzero vectors $u,v\in U$ the property $\ol{u}\neq\ol{v}$ is equivalent to the linear independence of $u,v$ in the $\F_q$-vector space~$U$.
	Therefore,
	\begin{equation}\label{eq:Q}
	Q:=|\cM|=\frac{q^k-1}{q-1}\,\frac{q^k-q}{q-1}.
	\end{equation}
	Finally, the set~$\cF$ consists of all equivalence classes of fractions (within the field $\Fqn$) of nonzero elements in~$U$.
	Its size~$f$ will play a crucial role later on in the study of the intersection distribution of $\Orb(U)$.
	The next result shows the relation of~$\cF$ to $\Orb(U)$: the elements in the equivalence classes of~$\cF$ correspond to the shifts
    $\alpha U$ such that $\alpha U \cap U \neq 0$.
	In particular, for determining the intersection distribution we only need to consider shifts
	$\alpha U$ where $\ol{\alpha}\in\cF$, which reduces considerably the computational effort.

	\begin{proposition}\label{prop:psiPreimageSize}
		Let $U\in\Gkn$ such that $\Stab(U)=\F_{q^t}^*$. Let $\ds(\Orb(U))=2k-2\ell$.
		Then the map
		$\psi: \cM \longrightarrow\cF,\quad (\ol{u},\ol{v})\longmapsto \ol{uv^{-1}}$ is well-defined and
		satisfies $\cF=\psi(\cM)\cup\{\ol{1}\}=\cL\cup\cS$.
		Furthermore,  for any $\ol{\alpha}\in\cF$ we have
		\begin{alphalist}
			\item $\ol{\alpha}\in\cS\Longleftrightarrow |\psi^{-1}(\ol{\alpha})|=(q^k-1)/(q-1)$,
			\item $\ol{\alpha}\in\cL_i\Longleftrightarrow |\psi^{-1}(\ol{\alpha})|=(q^i-1)/(q-1)$.
		\end{alphalist}
		Since $\cL_i=\emptyset$ if $t\nmid i$, this implies that the pre-images never have size $(q^i-1)/(q-1)$, where $t\nmid i$.
	\end{proposition}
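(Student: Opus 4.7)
My plan is to proceed in three steps, with the preimage count as the substantive part.

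Step 1 (well-definedness and the two decompositions of $\cF$). For well-definedness, I would note that replacing representatives $u,v$ by scalar multiples $\lambda u,\mu v$ with $\lambda,\mu\in\F_q^*$ multiplies $uv^{-1}$ by $\lambda\mu^{-1}\in\F_q^*$, so $\ol{uv^{-1}}\in\bP(\Fqn)$ is unchanged; the image lies in $\cF$ by definition. For $\cF=\psi(\cM)\cup\{\ol 1\}$, split the defining pairs of $\cF$ according to whether $\ol u=\ol v$ (producing only $\ol 1$) or $\ol u\neq\ol v$ (producing precisely $\psi(\cM)$). For $\cF=\cL\cup\cS$, I would argue that $\ol\alpha\in\cF$ iff there exist nonzero $u,v\in U$ with $u\in\alpha v\F_q^*$, equivalently $U\cap\alpha U\neq0$: one direction is immediate, and conversely any nonzero $w\in U\cap\alpha U$ can be written $w=\alpha v$ with $v\in U\setminus0$. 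This intersection is either the full space $U$ (so $\alpha U=U$ and $\ol\alpha\in\cS$) or has dimension in $\{1,\dots,\ell\}$ (so $\ol\alpha\in\cL$), and the two subsets are disjoint since $\cL_i$ demands $\alpha U\neq U$.

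Step 2 (preimage count, the core). Fix $\ol\alpha\in\cF$ with $\ol\alpha\neq\ol 1$ and set $j=\dim(U\cap\alpha U)$. The condition $\psi(\ol u,\ol v)=\ol\alpha$ unwinds to $u\in\alpha v\F_q^*$; since $u\in U$ and $U$ is an $\F_q$-space, this forces $\alpha v\in U$ and then $\ol u=\ol{\alpha v}$ is uniquely determined. Thus $\psi^{-1}(\ol\alpha)$ is in bijection with the set of classes $\ol v$ such that $v\in U\cap\alpha^{-1}U$ and $v\neq 0$. Since multiplication by $\alpha$ is an $\F_q$-linear bijection, $\alpha(U\cap\alpha^{-1}U)=\alpha U\cap U$, so $\dim(U\cap\alpha^{-1}U)=j$ and there are exactly $(q^j-1)/(q-1)$ admissible $\ol v$. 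The side constraint $\ol u\neq\ol v$ reduces to $\ol{\alpha v}\neq\ol v$, i.e.\ $\alpha\notin\F_q^*$, which is our standing hypothesis $\ol\alpha\neq\ol 1$. Specializing $j=k$ gives (a) and $j=i$ gives (b). The closing remark about $t\nmid i$ is then immediate from \cref{rem:tmult}, which asserts $\cL_i=\emptyset$ in that case, so no preimage can attain the forbidden cardinality.

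The only subtlety worth flagging is the element $\ol 1\in\cS\subseteq\cF$: since $\cM$ excludes the diagonal $\ol u=\ol v$, we have $\psi^{-1}(\ol 1)=\emptyset$, which is precisely why the proposition separates $\{\ol 1\}$ off in the decomposition $\cF=\psi(\cM)\cup\{\ol 1\}$ and why (a),(b) are naturally read on $\cF\setminus\{\ol 1\}$. Apart from tracking this edge case, the argument is routine once the key identification $\psi^{-1}(\ol\alpha)\leftrightarrow (U\cap\alpha^{-1}U)\setminus 0$ (up to $\F_q^*$-scaling) is in hand.
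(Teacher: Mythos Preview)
Your proposal is correct and follows essentially the same approach as the paper: show that one coordinate of a pair in $\psi^{-1}(\ol\alpha)$ determines the other, and then count the free coordinate by identifying it with the projectivization of an intersection space. The only cosmetic difference is that you parametrize by the second coordinate $\ol v$ (landing in $U\cap\alpha^{-1}U$) whereas the paper parametrizes by the first (landing in $U\cap\alpha U$); since these intersections have the same dimension, the counts agree. Your explicit treatment of the edge case $\ol\alpha=\ol 1$, where $\psi^{-1}(\ol 1)=\emptyset$ because $\cM$ excludes the diagonal, is in fact more careful than the paper's own proof, which tacitly reads~(a) on $\cS\setminus\{\ol 1\}$ (consistent with how the result is used in \cref{cor:Count}, where the factor $s-1$ appears).
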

	
	\begin{proof}
		The well-definedness of~$\psi$ is clear and so is $\psi(\cM)\cup\{\ol{1}\}=\cF$.
		To show that $\psi(\cM)\cup\{\ol{1}\}=\cL\cup\cS$,
		let $\ol{\alpha}=\ol{uv^{-1}}\in \psi(\cM)$ for some $u,v\in U\setminus 0$.
		Then there exists $\lambda\in\F_q^*$ such that $u=\lambda\alpha v$.
		Since $\lambda U=U$ this implies $U\cap \alpha U\neq 0$.
		Hence either $U=\alpha U$ or  $\dim(U\cap\alpha U)\in\{1,\ldots,\ell\}$.
		In the first case  $\alpha\in\F_{q^t}$, thus $\ol{\alpha}\in\cS$ and in the second case
		$\ol{\alpha}\in \cL$.
		Since obviously $\ol{1}\in\cS$, this shows $\psi(\cM)\cup\{\ol{1}\} \subseteq \cL\cup\cS$.
		The proof of the reverse inclusion proceeds similarly.
		If $\ol{\alpha}\in \cL\cup\cS$, then $1\leq\dim(U\cap\alpha U)\leq k$.
		Hence there exist $u,v \in U$ with $u=\alpha v$.
		So $\ol{\alpha}=\ol{uv^{-1}}$ is either~$\ol{1}$ or in $\psi(\cM)$.
		
		It remains to show~(a) and~(b).
		Note first that for $(\ol{v},\ol{w})\in\psi^{-1}(\ol{\alpha})$, the second component $\ol{w}$ is uniquely determined by the first one.
		Thus it suffices to count the number of possible first components.

		Let $\ol{\alpha}\in\cS$. Then $U=\alpha U$.
		As a consequence, for every $v\in U$ there exists a unique $w\in U$ such that $v=\alpha w$.
		Hence $\ol{\alpha}=\ol{vw^{-1}}=\psi(\ol{v},\ol{w})$.
		Because there exist $(q^k-1)/(q-1)$ elements $\ol{v}$ such that $v\in U$, the result follows.
		
		Let $\ol{\alpha}\in\cL_i$. Hence $\dim(U\cap \alpha U)=i$.
		Then for every $v\in U\cap\alpha U$ there exists $w\in U$ such that $v=\alpha w$.
		Using that there exist $(q^i-1)/(q-1)$ elements $\ol{v}$ such that $v\in U\cap\alpha U$, we may argue as above to conclude that
		$|\psi^{-1}(\ol{\alpha})|\geq(q^i-1)/(q-1)$.
		Conversely, suppose that $(\ol{x},\ol{y}) \in \psi^{-1}(\ol{\alpha})$.
		Then $\ol{x}=\ol{\alpha y}$ and $x=\lambda\alpha y$ for some $\lambda\in\F_q$.
		Thus $x\in U\cap\alpha U$.
		This leaves $(q^i-1)/(q-1)$ choices for $\ol{x}$, and thus $|\psi^{-1}(\ol{\alpha})|\leq(q^i-1)/(q-1)$.
		Hence $|\psi^{-1}(\ol{\alpha})|=(q^i-1)/(q-1)$.
	\end{proof}

	\begin{corollary}\label{cor:Count}
		Let $U\in\Gkn$ such that $\Stab(U)=\F_{q^t}^*$. Let $\ds(\Orb(U))=2k-2\ell$.
		Recall the cardinalities $f=|\cF|,\,s=|\cS|,\,Q=|\cM|$, and $\lambda_i=|\cL_i|$ for $i=1,\ldots,\ell$.
		Then
		\begin{equation}\label{eq:fsum}
		f= s+\sum_{i=1}^\ell \lambda_i.
		\end{equation}
		and
		\begin{equation}\label{eq:Qsum}
		Q=\sum_{i=1}^{\ell}\frac{q^i-1}{q-1}\lambda_{i}+\frac{q^k-1}{q-1}(s-1).
		\end{equation}
		In the special case where $t=1$, i.e.~$\Orb(U)$ is a full-length orbit, we have $s=1$, and thus
		\begin{equation}\label{eq:t1sums}
		f= 1+\sum_{i=1}^\ell \lambda_i \ \text{ and }\
		Q=\sum_{i=1}^{\ell}\frac{q^i-1}{q-1}\lambda_{i}.
		\end{equation}
	\end{corollary}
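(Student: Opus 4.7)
The plan is to derive both identities as direct counting consequences of \cref{prop:psiPreimageSize}, using that $\cF$ decomposes disjointly as $\cL\cup\cS$ and that $\cM$ is partitioned by the nonempty fibers of $\psi$. For \eqref{eq:fsum}, I would combine the equality $\cF=\cL\cup\cS$ from \cref{prop:psiPreimageSize} with the disjointness $\cL\cap\cS=\emptyset$ noted right after \cref{def:UAssoc}, together with the disjoint decomposition $\cL=\cL_1\sqcup\cdots\sqcup\cL_\ell$. Taking cardinalities yields $f=s+\sum_{i=1}^\ell\lambda_i$.

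For \eqref{eq:Qsum}, I would partition $\cM$ into the fibers of $\psi$. The one point that requires care is the observation that $\ol{1}\notin\psi(\cM)$: a representation $\ol{1}=\ol{uv^{-1}}$ with $u,v\in U\sm 0$ forces $u\in\F_q^*\cdot v$, i.e.\ $\ol{u}=\ol{v}$, so no such pair belongs to $\cM$. Hence
$$\cM\;=\!\!\bigsqcup_{\ol{\alpha}\in\cF\sm\{\ol{1}\}}\!\!\psi^{-1}(\ol{\alpha})\;=\;\Big(\bigsqcup_{i=1}^\ell\;\bigsqcup_{\ol{\alpha}\in\cL_i}\psi^{-1}(\ol{\alpha})\Big)\sqcup\!\!\bigsqcup_{\ol{\alpha}\in\cS\sm\{\ol{1}\}}\!\!\psi^{-1}(\ol{\alpha}).$$
Inserting the fiber cardinalities $(q^i-1)/(q-1)$ and $(q^k-1)/(q-1)$ from parts~(b) and~(a) of \cref{prop:psiPreimageSize}, and using $|\cS\sm\{\ol{1}\}|=s-1$, produces exactly \eqref{eq:Qsum}.

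The specialization $t=1$ is immediate: $\Stab(U)=\F_q^*$ forces $\cS=\{\ol{1}\}$ and hence $s=1$, reducing \eqref{eq:fsum} and \eqref{eq:Qsum} to \eqref{eq:t1sums}. There is no genuine obstacle here; the only subtle point is precisely the handling of the class $\ol{1}$ in the second identity, which is responsible for the $(s-1)$ factor rather than $s$. Everything else is bookkeeping that reads off directly from \cref{prop:psiPreimageSize}.
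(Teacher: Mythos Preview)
Your proposal is correct and follows essentially the same approach as the paper: both derive \eqref{eq:fsum} from the disjoint decomposition $\cF=\cL\cup\cS$ and \eqref{eq:Qsum} by partitioning $\cM$ into the $\psi$-fibers over $\cF\sm\{\ol{1}\}$ and inserting the fiber sizes from \cref{prop:psiPreimageSize}. Your explicit justification that $\ol{1}\notin\psi(\cM)$ is already contained in the statement $\cF=\psi(\cM)\cup\{\ol{1}\}$ of \cref{prop:psiPreimageSize}, so the two arguments are effectively identical.
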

	
	\begin{proof}
		The identity in~\eqref{eq:fsum} is a consequence of $\cL\cup\cS=\cF$  from \cref{prop:psiPreimageSize} along with $\cL\cap\cS=\emptyset$.
		From the same proposition we have $\psi(\cM)=\cL\cup\cS\setminus\{\ol{1}\}$, thus
		$\cM=\bigcup_{i=1}^\ell\psi^{-1}(\cL_i)\cup\psi^{-1}(\cS\setminus\{\ol{1}\})$.
		Now~\eqref{eq:Qsum} follows from \cref{prop:psiPreimageSize}(a) and~(b) and the cardinality of~$\cM$ in~\eqref{eq:Q}.
		The rest follows from $s=(q^t-1)/(q-1)$.
	\end{proof}
	
	Recalling~$Q$ from~\eqref{eq:Q}, the above identities~\eqref{eq:fsum} and~\eqref{eq:Qsum} allow us
	to give a lower and upper bound on the number of fractions of~$U$ in terms of $q$ and~$k$.
	
	\begin{proposition}\label{prop:fbounds}
		With the data as in \cref{cor:Count} we have
		\[
		\frac{q^k-1}{q-1}\leq f\leq  Q+1.
		\]
		Furthermore,
		\begin{alphalist}
			\item $f=(q^k-1)/(q-1)\Longleftrightarrow\ell=0\Longleftrightarrow t=k$. This is the spread-code case, thus $U=a\F_{q^k}$ for some $a\in\Fqn$.
			\item $f=Q+1\Longleftrightarrow\ell=1\Longleftrightarrow\ds(\Orb(U))=2k-2$.
                    This is the case of optimal full-length orbits (see \cref{def:OptFL}).
		\end{alphalist}
	\end{proposition}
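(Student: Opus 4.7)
The plan is to derive both bounds from the two identities in \cref{cor:Count}, namely
\[
f=s+\sum_{i=1}^{\ell}\lambda_i,\qquad Q=\sum_{i=1}^{\ell}\frac{q^i-1}{q-1}\lambda_i+\frac{q^k-1}{q-1}(s-1),
\]
by comparing the coefficients $\frac{q^i-1}{q-1}$ in the second identity against their maximum $\frac{q^k-1}{q-1}$ and their minimum value~$1$ (attained at $i=1$), respectively. Recall from \cref{rem:tmult} that when $\ell\geq 1$ one has $1\leq i\leq \ell\leq k-1$, so both of these extremal values are the natural ones to use.

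For the \emph{lower bound}, I would bound each coefficient from above by $\frac{q^k-1}{q-1}$, which is already the coefficient of $s-1$, to obtain
\[
Q\leq\frac{q^k-1}{q-1}\Bigl(s-1+\sum_{i=1}^{\ell}\lambda_i\Bigr)=\frac{q^k-1}{q-1}(f-1).
\]
Substituting $Q=\frac{(q^k-1)(q^k-q)}{(q-1)^2}$ from \eqref{eq:Q} and simplifying yields $f\geq(q^k-1)/(q-1)$. Equality forces all strict inequalities $\frac{q^i-1}{q-1}<\frac{q^k-1}{q-1}$ (valid for $i\leq k-1$) to be vacuous, hence $\lambda_i=0$ for $1\leq i\leq\ell$, i.e.\ $\ell=0$. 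Then $f=s$, and $f=(q^k-1)/(q-1)$ gives $s=(q^k-1)/(q-1)$, so $t=k$; by \cref{rem:tmult}(a) this is precisely the spread-code case and $U=a\F_{q^k}$ for some~$a$. Conversely, if $t=k$ then $\ell=0$ and the identities immediately give $f=s=(q^k-1)/(q-1)$. This establishes~(a).

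For the \emph{upper bound}, I would bound each coefficient $\frac{q^i-1}{q-1}$ from below by~$1$ (with equality iff $i=1$) and use $\frac{q^k-1}{q-1}(s-1)\geq s-1$ (with equality iff $s=1$ or $k=1$, and $k\geq 2$ by assumption), giving
\[
Q\geq\sum_{i=1}^{\ell}\lambda_i+(s-1)=f-1.
\]
Hence $f\leq Q+1$. Equality in the first estimate forces $\lambda_i=0$ whenever $i\geq 2$, so $\ell\in\{0,1\}$; equality in the second forces $s=1$, i.e.\ $t=1$. The case $\ell=0$ would require $t=k\geq 2$ by~(a), contradicting $t=1$, so $\ell=1$ and $t=1$, which by \cref{rem:tmult}(b) and \cref{def:OptFL} characterizes optimal full-length orbits. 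Conversely, plugging $\ell=1,\,s=1$ into the two identities gives $f=1+\lambda_1=1+Q$. This proves~(b).

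The main thing to be careful about is the corner case $\ell=0$ in part~(b): the calculation must rule it out, which it does via~(a) together with the standing assumption $k\geq 2$. The rest is essentially a weighted-average comparison against the extreme values of $\frac{q^i-1}{q-1}$ over the range of~$i$ that appears.
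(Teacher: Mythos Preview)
Your proof is correct. The upper bound and part~(b) match the paper's argument essentially verbatim (the paper writes the chain $f-1\leq Q$ via the same coefficient comparison $\frac{q^i-1}{q-1}\geq 1$ and reads off $\ell=s=1$ from equality).

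Where you differ is in the lower bound and the converse of~(a). The paper obtains $f\geq(q^k-1)/(q-1)$ directly from $\dim U=k$ (for fixed $v_0\in U$, the assignment $\ol{u}\mapsto\ol{uv_0^{-1}}$ is injective), and then proves the implication $f=(q^k-1)/(q-1)\Rightarrow t=k$ by an explicit algebraic argument: after shifting so that $1\in U$, the hypothesis forces every fraction $uv^{-1}$ to lie in $\F_q\!\cdot\! U$, hence $U$ is closed under division and equals $\F_{q^k}$. You instead derive both facts from the identities in \cref{cor:Count} by bounding each coefficient $\frac{q^i-1}{q-1}$ from above by $\frac{q^k-1}{q-1}$, obtaining $Q\leq\frac{q^k-1}{q-1}(f-1)$, and reading off $\lambda_i=0$ for all~$i$ (hence $\ell=0$, hence $t=k$ via \cref{rem:tmult}) from the equality case. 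Your route is more uniform---everything flows from the two counting identities---while the paper's route is more elementary and makes the field structure of~$U$ visible. Both are valid; your version has the minor advantage that it exhibits the lower and upper bounds as two instances of the same weighted-average trick.
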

	
	\begin{proof}
		The lower bound for~$f$ is obvious from $\dim(U)=k$.
		As for the upper bound, note that $\frac{q^i-1}{q-1} \geq 1$ for $i=1,\ldots,\ell$ and $i=k$.
		So by \eqref{eq:fsum} and \eqref{eq:Qsum}
		\begin{equation}\label{eq:fQineq}
		f-1 = \sum_{i=1}^\ell \lambda_i + (s-1)
		\leq \sum_{i=1}^\ell \left( \frac{q^i-1}{q-1} \right) \lambda_i + \frac{q^k-1}{q-1} (s-1) = Q.
		\end{equation}
		It remains to prove~(a) and~(b).
		(a) If $\ell=0$, then $t=k$ by \cref{rem:tmult} and $f=s=(q^k-1)/(q-1)$ thanks to~\eqref{eq:fsum} and \cref{def:UAssoc}.
		Conversely, let $f=(q^k-1)/(q-1)$.
		Then $f=|\{\ol{u}\mid u\in U\setminus 0\}|$.
		Replacing~$U$ by a suitable shift we may assume without loss of generality that $1\in U$.
		Then the above along with the definition $f=|\cF|=|\{\ol{uv^{-1}}\mid u,v\in U\setminus 0\}|$
        tells us that  for every $u,v\in U\setminus 0$ there exists $w\in U$ and $\lambda\in\F_q$ such that
		$uv^{-1}=\lambda w$.
		Hence $U$ is closed under division and inverses and is therefore the subfield $\F_{q^k}$.
		The rest follows again from \cref{rem:tmult}.
		
		(b) Suppose $\ell=1$, i.e., $\ds(\Orb(U))=2k-2$. Then \cref{rem:tmult}(b) implies $t=1$.
		Thus $s=1$ and subsequently $f=Q+1$ by~\eqref{eq:t1sums}.
		On the other hand, if $f=Q+1$ then we have equality in~\eqref{eq:fQineq}, which in turn implies
		$\ell=s=1$ since $\frac{q^i-1}{q-1} >1$ for $i>1$.
	\end{proof}
	
	Full-length orbits with maximum possible distance $2k-2$ do indeed exist.
This has been studied in detail in~\cite{RRT18}.
	
	\begin{definition}[{\cite[Def.~1]{RRT18}}]\label{def:Sidon}
		A subspace $U\in\Gkn$ is called a \emph{Sidon space} if it has the property that whenever
		$a,b,c,d \in U\sm0$ are  such that $ab=cd$, then $\{\ol{a},\ol{b} \}=\{ \ol{c},\ol{d} \}$.
	\end{definition}

	\begin{theorem}[{\cite[Lemma 34]{RRT18}}]\label{thm:Sidon}
		Let $U \in\Gkn$. Then  $\Orb(U)$ is a full-length orbit with minimum distance $2k-2$ if and only if~$U$ is a Sidon space.
	\end{theorem}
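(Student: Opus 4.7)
The plan is to use \cref{prop:fbounds}(b) to reformulate the theorem as a counting statement about the map $\psi\colon\cM\to\cF$ from \cref{prop:psiPreimageSize}, and then match that statement with the Sidon condition.

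A full-length orbit with minimum distance $2k-2$ is exactly an optimal full-length orbit in the sense of \cref{def:OptFL}; note that $\ds(\Orb(U))=2k-2$ already forces $t=1$ by \cref{rem:tmult}(b). By \cref{prop:fbounds}(b), the theorem is thus equivalent to the assertion that $f=Q+1$ iff $U$ is Sidon. Observe that $\ol{1}\notin\psi(\cM)$, because $\psi(\ol{u},\ol{v})=\ol{1}$ would force $u=\lambda v$ with $\lambda\in\F_q^*$, contradicting $\ol{u}\neq\ol{v}$. Combined with \cref{prop:psiPreimageSize}, this gives the disjoint union $\cF=\psi(\cM)\cup\{\ol{1}\}$, so $f=|\psi(\cM)|+1\leq|\cM|+1=Q+1$, with equality iff $\psi$ is injective on $\cM$. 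Hence it suffices to show that $\psi$ is injective on $\cM$ iff $U$ is Sidon.

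For the ``if'' direction, suppose $U$ is Sidon and $\psi(\ol{u},\ol{v})=\psi(\ol{x},\ol{y})$ for pairs in $\cM$. Unwinding, $uy=(\lambda x)v$ for some $\lambda\in\F_q^*$, and since $\lambda x\in U$, the Sidon property applied to $u\cdot y=(\lambda x)\cdot v$ gives $\{\ol{u},\ol{y}\}=\{\ol{x},\ol{v}\}$. The alternative $\ol{u}=\ol{v}$ is excluded by $(\ol{u},\ol{v})\in\cM$, so $\ol{u}=\ol{x}$ and $\ol{y}=\ol{v}$; the two pairs coincide.

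For the ``only if'' direction, assume $\psi$ is injective on $\cM$ and let $ab=cd$ with nonzero $a,b,c,d\in U$. Rewriting as $ad^{-1}=cb^{-1}$ yields $\psi(\ol{a},\ol{d})=\psi(\ol{c},\ol{b})$; note that $\ol{a}=\ol{d}$ iff $\ol{b}=\ol{c}$ (substitute back into $ab=cd$), and in that edge case $\{\ol{a},\ol{b}\}=\{\ol{c},\ol{d}\}$ is immediate. Otherwise both pairs lie in $\cM$ and injectivity yields $\ol{a}=\ol{c}$ and $\ol{d}=\ol{b}$, again giving the Sidon conclusion. The main obstacle is recognizing that this single edge case is the only real subtlety; once that is spotted, the argument is a direct bookkeeping exercise around \cref{prop:psiPreimageSize}.
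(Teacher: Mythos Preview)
Your argument is correct. Note, however, that the paper does not actually supply a proof of this theorem: it is quoted as \cite[Lemma~34]{RRT18} and used as a black box. So there is no ``paper's own proof'' to compare against here.

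That said, your approach is worth a remark precisely because it is \emph{internal} to the paper: rather than appealing to the original argument in \cite{RRT18}, you derive the result from the machinery of \cref{prop:psiPreimageSize} and \cref{prop:fbounds}. The key observation --- that $f=Q+1$ is equivalent to injectivity of $\psi$ on $\cM$, which in turn is exactly a reformulation of the Sidon condition --- is clean and makes the theorem a corollary of the fraction-counting framework already set up in Section~\ref{sec:fractions}. The edge case $\ol{a}=\ol{d}$ (equivalently $\ol{b}=\ol{c}$) is handled correctly, and there is no circularity since the proof of \cref{prop:fbounds}(b) does not rely on \cref{thm:Sidon}. One small stylistic point: the last sentence (``The main obstacle is recognizing\ldots'') is commentary rather than proof and could be dropped.
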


	In \cite[Thm.~12 and Thm.~16]{RRT18} Sidon spaces in $\Gkn$ are constructed for the case where~$k<n/2$ is a divisor of~$n$
	or $k=n/2$ and $q\geq 3$, and thus the existence of full-length orbits with maximum possible distance is guaranteed for these cases.
	
	It follows now easily from the above results that all these orbit codes have the same intersection distribution.
	
	\begin{theorem}\label{theo:SidonCount}
		Let $U\in\Gkn$ be a Sidon space. Then the full-length orbit $\Orb(U)$ has intersection distribution $(\lambda_0,\lambda_1)$, where
		\[
		\lambda_1=Q=\frac{q^k-1}{q-1}\,\frac{q^k-q}{q-1}=f-1,\quad
		\lambda_0=\frac{q^n-1}{q-1}-\lambda_1-1.
		\]
		In particular, the intersection distribution of a full-length orbit code with maximum distance depends only on the parameters $q,n$, and $k$.
		Furthermore, all $k$-dimensional Sidon spaces in $\F_{q^n}$ have the same number of fractions.
	\end{theorem}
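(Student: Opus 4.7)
My plan is to derive everything by assembling the structural results already established, with \cref{thm:Sidon} serving as the bridge from the Sidon condition to the numerical regime where \cref{prop:fbounds} and \cref{cor:Count} apply cleanly.

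First I would invoke \cref{thm:Sidon} to translate the hypothesis that $U$ is a Sidon space into the statement that $\Orb(U)$ is a full-length orbit of distance $2k-2$; in the notation of \cref{def:DistDistr} and \cref{rem:tmult} this is precisely the situation $t=1$ (hence $s=1$) and $\ell=1$. The intersection distribution therefore has only two components, $\lambda_0$ and $\lambda_1$.

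Next, because we are in the case $\ell=1$, \cref{prop:fbounds}(b) yields the equality $f=Q+1$, and by \eqref{eq:t1sums} of \cref{cor:Count} with $t=1$ and $\ell=1$ we simultaneously have $f=1+\lambda_1$ and $Q=\lambda_1$. This pins down $\lambda_1=Q=\tfrac{q^k-1}{q-1}\cdot\tfrac{q^k-q}{q-1}=f-1$ in one step, and I would just cite the formula for $Q$ from \eqref{eq:Q}. For $\lambda_0$ I would use that $\bP(\Fqn)$ is partitioned by the subsets $\cL_0,\cL_1$ together with the single class $\ol{1}$ corresponding to $\alpha U=U$ (since $t=1$); counting gives $\lambda_0+\lambda_1+1=(q^n-1)/(q-1)$, i.e.\ the stated formula for $\lambda_0$.

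Finally, since every expression appearing on the right-hand side depends only on $q,n,k$, the two conclusions — that the intersection distribution of an optimal full-length orbit depends only on $q,n,k$, and that all $k$-dimensional Sidon spaces in $\Fqn$ share the same value $f=Q+1$ — follow at once. Since all the serious content has been pushed into \cref{prop:fbounds}, \cref{cor:Count}, and \cref{thm:Sidon}, there is no real obstacle here; the only thing to be careful about is bookkeeping the contribution of the stabilizer, namely the single class $\ol{1}$, when one passes between the decomposition $\cF=\cL\cup\cS$ and the full projective space $\bP(\Fqn)$.
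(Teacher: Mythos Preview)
Your proposal is correct and follows essentially the same route as the paper: reduce to $t=s=1$ and $\ell=1$ via \cref{thm:Sidon}, then read off $\lambda_1=Q$ and $f=1+\lambda_1$ from \eqref{eq:t1sums}, and recover $\lambda_0$ from the count $\lambda_0+\lambda_1+1=|\bP(\Fqn)|$. The only minor redundancy is the appeal to \cref{prop:fbounds}(b): once $\ell=1$, the identity $f=Q+1$ already drops out of \eqref{eq:t1sums} directly, so that citation is not needed.
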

	
	\begin{proof}
	Under the given assumptions we have $t=1$, and thus $s=1$, as well as $\ell=1$. Now the result for $\lambda_1$ follows from \eqref{eq:t1sums},
		while~$\lambda_0$ can be computed using \cref{def:DistDistr}.
	\end{proof}
	
    Since for $k=2$ every $U\in\Gkn$ leads to an orbit code with distance $\ds=2k$ or $\ds=2(k-1)$, we have fully described the intersection distribution of
    all such orbit codes.
    Hence from now on we may assume $k\geq 3$.

	Examples show that for full-length orbit codes with minimum distance at most $2(k-2)$ the intersection distribution does not only depend on $q,n,k$ and the minimum distance.
	We will study that case in further detail in the next section.

    We close this section with a generalization of the previous results by taking the stabilizer into account.
    This improves on the upper bound for $f$ given in \cref{prop:fbounds}.

    \begin{proposition}\label{prop:GenSidon}
     Let $U\in\Gkn$ and $\Stab(U)=\F_{q^t}^*$. Then
     \[
       \frac{q^k-1}{q-1}\leq f\leq\frac{q^k-1}{q^t-1}\,\frac{q^k-q^t}{q-1}+\frac{q^t-1}{q-1}.
     \]
     \begin{alphalist}
     \item $f=\frac{q^k-1}{q^t-1}\,\frac{q^k-q^t}{q-1}+\frac{q^t-1}{q-1}\Longleftrightarrow\ds(\Orb(U))=2(k-t)$ (which is the maximum possible distance for
              a cyclic orbit code with stabilizer~$\F_{q^t}^*$).
              This is the case if and only if $U$ is a Sidon space over~$\F_{q^t}$, i.e., if $a,b,c,d\in U\setminus 0$ and $ab=cd$, then
             $\{\ol{a}^{(t)},\ol{b}^{(t)}\}=\{ \ol{c}^{(t)},\ol{d}^{(t)}\}$.
     \item If $\ds(\Orb(U))=2(k-t)$, then the intersection distribution is given by $(\lambda_0,\lambda_t)$, where
     		\[
     		\lambda_t=\frac{q^k-1}{q^t-1}\,\frac{q^k-q^t}{q-1}=f-\frac{q^t-1}{q-1},\quad
     		\lambda_0=\frac{q^n-q^t}{q-1}-\lambda_t.
     		\]
     \end{alphalist}
     In particular, if $\Stab(U)=\F_{q^{k/2}}^*$ then $f=(q^{3k/2}-1)/(q-1)$.
     \end{proposition}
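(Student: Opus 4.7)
The plan is to mirror the proof of \cref{prop:fbounds} while carefully tracking the divisibility constraints imposed by the stabilizer. The lower bound $f\geq(q^k-1)/(q-1)$ is immediate: fixing any $v\in U\setminus 0$, the map $\ol{u}\mapsto\ol{uv^{-1}}$ embeds $\{\ol{u}:u\in U\setminus 0\}$ into $\cF$. For the upper bound I would start from the two identities of \cref{cor:Count}, namely $f=s+\sum_{i=1}^\ell\lambda_i$ and $Q=\sum_{i=1}^\ell\tfrac{q^i-1}{q-1}\lambda_i+\tfrac{q^k-1}{q-1}(s-1)$, and sharpen the inequality chain used in \cref{prop:fbounds} via \cref{rem:tmult}: since $\lambda_i=0$ unless $t\mid i$, every surviving index satisfies $i\geq t$, hence $\tfrac{q^i-1}{q-1}\geq s=(q^t-1)/(q-1)$, with equality exactly when $i=t$. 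Solving the resulting estimate for $f$ and simplifying gives the claimed upper bound, with equality iff only $\lambda_t$ is nonzero, i.e.\ iff $\ell=t$, i.e.\ iff $\ds(\Orb(U))=2(k-t)$.

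For part~(a) I would then show that this maximality condition is equivalent to $U$ being $\F_{q^t}$-Sidon. The key observation is that for every $\alpha\in\F_{q^n}^*$ the intersection $U\cap\alpha U$ is an $\F_{q^t}$-subspace, since both $U$ and $\alpha U$ are closed under multiplication by $\F_{q^t}$. Hence $\ds(\Orb(U))=2(k-t)$ is equivalent to $\dim_{\F_{q^t}}(U\cap\alpha U)\leq 1$ whenever $\alpha U\neq U$. From there I would adapt the standard Sidon argument used for \cref{thm:Sidon}: if $u_1,u_2\in U\cap\alpha U$ are $\F_{q^t}$-linearly independent and $u_i=\alpha v_i$, then $u_1v_2=u_2v_1$ is a collision whose Sidon resolution forces either $\ol{u_1}^{(t)}=\ol{u_2}^{(t)}$ (contradicting independence) or $\ol{u_1}^{(t)}=\ol{v_1}^{(t)}$ (forcing $\alpha\in\F_{q^t}^*$, contradicting $\alpha U\neq U$). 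Conversely, any Sidon-violating identity $ab=cd$ produces $\alpha=ac^{-1}=db^{-1}\notin\F_{q^t}^*$ together with $a,d\in U\cap\alpha U$ that are $\F_{q^t}$-linearly independent.

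For part~(b), when $\ds(\Orb(U))=2(k-t)$ only $\lambda_t$ is nonzero among $\lambda_1,\ldots,\lambda_\ell$, so \cref{cor:Count} directly yields the stated value of $\lambda_t$, and $\lambda_0$ is then determined from $(q^n-1)/(q-1)=|\bP(\F_{q^n})|=\lambda_0+\lambda_t+s$. Finally, for the special case $t=k/2$, \cref{rem:tmult} combined with $\Stab(U)=\F_{q^{k/2}}^*\neq\F_{q^k}^*$ forces $\ell$ to be a positive multiple of $k/2$ that is strictly smaller than $k$; hence $\ell=k/2=t$, part~(b) applies, and substituting $t=k/2$ into the upper bound together with the factorization $q^{3k/2}-1=(q^{k/2}-1)(q^k+q^{k/2}+1)$ simplifies the expression to $(q^{3k/2}-1)/(q-1)$.

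The main obstacle will be the book-keeping in part~(a): both possible identifications arising from the Sidon condition must be traced back to the desired contradiction (with $\F_{q^t}$-independence in one case, with $\alpha\notin\F_{q^t}^*$ in the other), and in the converse direction one must verify that $\alpha=ac^{-1}$ really fails to lie in $\F_{q^t}^*$, which in turn depends on the assumed non-trivial arrangement of the classes $\ol{a}^{(t)},\ol{b}^{(t)},\ol{c}^{(t)},\ol{d}^{(t)}$.
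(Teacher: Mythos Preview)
Your argument is correct, but it proceeds differently from the paper. The paper changes the ground field: it sets $\hat q=q^t,\ \hat k=k/t,\ \hat n=n/t$, observes that $U$ generates a full-length orbit over~$\F_{\hat q}$, introduces the $\F_{\hat q}$-projectivized quantities $f^{(t)},\lambda_i^{(t)},Q^{(t)}$ and records the scaling relations $f=\tfrac{q^t-1}{q-1}f^{(t)}$, $\lambda_{it}=\tfrac{q^t-1}{q-1}\lambda_i^{(t)}$, etc. All claims then follow by invoking \cref{prop:fbounds} and \cref{theo:SidonCount} over~$\F_{\hat q}$ and translating back via these relations; the Sidon characterization in~(a) is simply \cref{thm:Sidon} applied over~$\F_{\hat q}$. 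By contrast you stay over~$\F_q$ throughout: you sharpen the inequality in \cref{prop:fbounds} by using $t\mid i$ to replace the trivial estimate $\tfrac{q^i-1}{q-1}\geq 1$ with $\tfrac{q^i-1}{q-1}\geq s$, and you establish the Sidon equivalence by a direct intersection-dimension argument rather than by quoting \cref{thm:Sidon}. Your route is slightly more self-contained and makes the role of the divisibility constraint in \cref{rem:tmult} explicit; the paper's route is shorter and conceptually cleaner, since it identifies the statement as literally the $t=1$ case after a base change. Both approaches handle the $t=k/2$ coda in the same way (ruling out $\ell=0$ because $\Stab(U)\neq\F_{q^k}^*$ and then forcing $\ell=t$).
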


    \begin{proof}
    By assumption $t\mid \gcd(n,k)$. Set $\hat{q}=q^t,\, \hat{k}=k/t,$ and $\hat{n}=n/t$.
    Then $|\Orb(U)|=(\hat{q}^{\hat{n}}-1)/(\hat{q}-1)$, and thus it is a full-length orbit if considered as a collection of $\F_{\hat{q}}$-subspaces in the ambient space
    $\F_{\hat{q}^{\hat{n}}}$.
    Hence we may apply~\eqref{eq:t1sums} if we replace $\F_q$ by~$\F_{\hat{q}}$.
    In order to do so, we need to generalize \cref{def:UAssoc} and the sets~$\cL_i$ by projectifying with respect to the scalar field~$\F_{\hat{q}}$.
    We denote the resulting sets and cardinalities with a superscript $(t)$, thus $\cL_i^{(t)}=\{\ol{\alpha}^{(t)}\mid \dim_{\F_{\hat{q}}}(U\cap \alpha U)=i\}$ etc.
    Then $\cS^{(t)}=\{\ol{1}^{(t)}\}$ and
    \begin{equation}\label{eq:rel}
      \lambda_{it}=\frac{q^t-1}{q-1}\lambda_i^{(t)},\quad f=\frac{q^t-1}{q-1}f^{(t)},\quad s=\frac{q^t-1}{q-1}s^{(t)},\quad
      Q^{(t)}=|\cM^{(t)}|=\frac{\hat{q}^{\hat{k}}-1}{\hat{q}-1}\,\frac{\hat{q}^{\hat{k}}-\hat{q}}{\hat{q}-1}.
    \end{equation}
    Now we can prove the above statement.
     From \cref{prop:fbounds} we have $(\hat{q}^{\hat{k}}-1)/(\hat{q}-1)\leq f^{(t)}\leq Q^{(t)}+1$, and~\eqref{eq:rel} leads to the stated inequalities.
    \\
    (a) \cref{prop:fbounds}(b) tells us that $f^{(t)}= Q^{(t)}+1$ iff the subspace distance is $2(\hat{k}-1)$, and where the distance is computed via
    dimensions over~$\F_{q^t}$.
    Hence the latter becomes $\ds(\Orb(U))=2t(\hat{k}-1)=2(k-t)$ as dimensions over~$\F_q$.
   \\
   (b) From \cref{theo:SidonCount} we have $\lambda_1^{(t)}=Q^{(t)}=f^{(t)}-1$ and $\lambda_0^{(t)}=(\hat{q}^{\hat{n}}-1)/(\hat{q}-1)-\lambda_1^{(t)}-1$.
   Using~\eqref{eq:rel}, we obtain the stated expression for~$\lambda_t$ and~$\lambda_0$.

   Finally, if $t=k/2$, then $U$ is not a cyclic shift of a field and thus must be Sidon over $\F_{q^t}$.
   Hence we may apply~(a) and simplify.
     \end{proof}
	
	\section{Intersection Distribution of General Full-Length Orbit Codes}\label{sec:GeneralOrbits}
	
	In this section we will generalize some of the results of the previous section to the intersection distribution of a cyclic orbit code with smaller minimum distance.
	After the spread codes and optimal full-length orbits, the cyclic orbit codes with the best combination of orbit size and minimum distance
    are full-length orbit codes with minimum distance $2k-4$.
	Our goal in this section is to describe the intersection distribution of such codes in terms of the parameters $q,n,k$ and a new parameter~$r$.
	This new parameter counts the number of cyclic orbits generated by the $2$-dimensional intersections $U\cap \alpha U$.
	These parameters together with \eqref{eq:Qsum} are enough to completely determine the intersection distribution, which we give in the following theorem.
    In this section we may and will assume $3\leq k\leq n/2$.
	Recall $Q=|\cM|$ from~\eqref{eq:Q}.
	
	\begin{theorem}\label{thm:Dist2k4L2Structure}
		Let $U \in \cG_q(k,n)$ generate a full-length orbit with $\ds(\Orb(U))=2k-4$.
		Then one of the following cases occurs.
		\begin{alphalist}
		\item $U$ contains a cyclic shift of $\F_{q^2}$ (hence~$n$ is even). In this case $\Orb(U)$ has intersection distribution $(\lambda_0,\lambda_1,\lambda_2)$, where
			\begin{align*}
				\lambda_2 &= q+rq(q+1), \\
				 \lambda_1 &= Q-(q+1)\lambda_2=\frac{q^k-1}{q-1}\frac{q^k-q}{q-1} - (q+1)(q+rq(q+1)),\\
				 \lambda_0 &= |\bP(\F_{q^n})|-\lambda_1-\lambda_2-1= \frac{q^n-1}{q-1} +q^2(1+r(q+1))-Q-1
			\end{align*}
			for some $r\geq 0$.
		
		\item $U$ does not contain a cyclic shift of $\F_{q^2}$. In this case, $\Orb(U)$ has intersection distribution $(\lambda_0,\lambda_1,\lambda_2)$, where
			\begin{align*}
				\lambda_2 &= rq(q+1), \\
				\lambda_1 &= Q-(q+1)\lambda_2=\frac{q^k-1}{q-1}\frac{q^k-q}{q-1} - rq(q+1)^2,\\
				\lambda_0 &= |\bP(\F_{q^n})|-\lambda_1-\lambda_2-1= \frac{q^n-1}{q-1} +rq^2(q+1)-Q-1
			\end{align*}
			for some $r \geq 1$.
		\end{alphalist}
	\end{theorem}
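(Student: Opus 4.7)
The plan is to reduce to computing $\lambda_2$ via \cref{cor:Count}. Specializing the identities there to $t=1$ yields $\lambda_1 = Q - (q+1)\lambda_2$ and $\lambda_0 = (q^n-1)/(q-1) - 1 - \lambda_1 - \lambda_2$, so the entire intersection distribution is determined by $\lambda_2$. To separate cases (a) and (b), I would first show that in case (a), $U$ contains \emph{exactly} one cyclic shift of $\F_{q^2}$: two distinct shifts $a\F_{q^2}, b\F_{q^2} \subseteq U$ necessarily meet trivially, so for any $\alpha \in \F_{q^2}^*\setminus\F_q^*$ we would have $\alpha U \supseteq a\F_{q^2} + b\F_{q^2}$, forcing $\dim(U\cap\alpha U)\ge 4$ and contradicting $\ell=2$. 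Taking WLOG $\F_{q^2}\subseteq U$, the $q$ classes $\ol{\alpha}$ with $\alpha\in\F_{q^2}^*\setminus\F_q^*$ each satisfy $U\cap\alpha U = \F_{q^2}$, which is the source of the additive $+q$ in case (a).

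I would then organize the remaining contributions to $\lambda_2$ by the cyclic orbit of $U\cap\alpha U$. For any non-spread $2$-dimensional subspace $W\subseteq U$, introduce
\[
I(W) := \{\beta \in \F_{q^n} : \beta W \subseteq U\},
\]
an $\F_q$-subspace of $\F_{q^n}$ containing $\F_q$. Since $\Stab(W)=\F_q^*$, the $2$-dimensional subspaces of $U$ in $\Orb(W)$ are in bijection with the non-zero classes of $I(W)$ modulo $\F_q^*$, so their number is $m := (q^{\dim I(W)}-1)/(q-1)$. A direct count (summing $b(W)=m-1$ over the $m$ members of the orbit inside $U$) shows that the orbit contributes $m(m-1)$ to $\lambda_2$. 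The formulas in the theorem then reduce to the key claim that $m=q+1$ (equivalently $\dim I(W)=2$) for every non-spread orbit with $m\ge 2$, since this yields a contribution of exactly $q(q+1)$ per orbit.

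For the key claim, pick any $\gamma\in I(W)\setminus\F_q$, so that $\gamma W\neq W$ is a second $2$-dimensional member of $\Orb(W)$ inside $U$. In the principal subcase $W\cap\gamma W\neq 0$, choose $v\in W\cap\gamma W\setminus 0$; then $v=\gamma u$ for some $u\in W$ and $W = v(\F_q + \F_q\gamma^{-1})$. The condition $\beta W\subseteq U$ becomes $\beta v,\, \beta v\gamma^{-1} \in U$, i.e., $\beta\in v^{-1}(U\cap\gamma U)$. Since $\gamma W\subseteq U\cap\gamma U$ and $\ell=2$ force $U\cap\gamma U=\gamma W$, we obtain $\beta\in v^{-1}\gamma W = \F_q+\F_q\gamma$, giving $I(W)=\F_q+\F_q\gamma$ as desired.

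The main obstacle is the subcase $W\cap\gamma W = 0$, where the explicit description of $W$ via $v$ is unavailable and $Z:=W+\gamma W$ is a $4$-dimensional subspace of $U$. Addressing this subcase uniformly for all $k\ge 4$ is the central difficulty; I expect the argument to proceed by propagating the constraint $\dim(U\cap\alpha U)\le 2$ through shifts $\alpha=\gamma\delta^{-1}$ and $\alpha=\gamma^i$ for hypothetical $\delta\in I(W)\setminus(\F_q+\F_q\gamma)$, combined with a linear-algebra argument inside $I(W)\cdot W\subseteq U$ that uses $\Stab(U)=\F_q^*$ to rule out elements of $I(W)$ outside $\F_q+\F_q\gamma$.
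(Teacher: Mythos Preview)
Your overall strategy coincides with the paper's: reduce to computing~$\lambda_2$, show there is at most one shift of~$\F_{q^2}$ inside~$U$ contributing~$q$ to~$\lambda_2$, and for each full-length-orbit $2$-dimensional intersection~$V$ show that~$|A_V|=q+1$ (your~$m=q+1$) and that the $V$'s group into families of size~$q+1$, each contributing~$q(q+1)$. Your set~$I(W)$ is exactly the paper's~$A_W$ under the bijection $\ol{\alpha}\mapsto\ol{\alpha^{-1}}$, and your observation $I(\delta W)=\delta^{-1}I(W)$ is the content of \cref{prop:MaxIntFullLenOrbitStructure}. So the reduction and the counting are correct.

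The genuine gap is precisely the one you flag: the subcase $W\cap\gamma W=0$ for the key claim~$\dim I(W)=2$. This case does occur (for $k\ge 4$), and your suggested line of attack via shifts $\gamma\delta^{-1}$ or $\gamma^i$ does not obviously close it. The paper bypasses the dichotomy entirely with a different shift. Writing $V=\langle v_1,v_2\rangle=U\cap\alpha U$ and taking any $\ol{\beta}\in A_V\setminus\{\ol{1}\}$, one has $v_i=\alpha u_i=\beta w_i$ with $u_i,w_i\in U$. The crucial observation is that $\tilde U:=\langle v_1,u_1,w_1\rangle\subseteq U$ is sent by the shift $v_2/v_1$ to $\langle v_2,u_2,w_2\rangle\subseteq U$, so $\tilde U\subseteq U\cap (v_1/v_2)U$. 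Since $v_1/v_2\notin\F_q$ and $\ell=2$, this intersection is at most $2$-dimensional, forcing $\{v_1,u_1,w_1\}$ to be $\F_q$-dependent. A short case analysis of that dependence yields either $\ol{\beta}=\ol{\alpha}$ or $1=\lambda\alpha^{-1}+\mu\beta^{-1}$ for some $\lambda,\mu\in\F_q^*$; the latter gives at most $q-1$ further classes~$\ol{\beta}$, hence $|A_V|\le q+1$. The lower bound $|A_V|\ge q+1$ comes from the $\F_q$-action $\alpha\mapsto\alpha/(1+\lambda\alpha)$ (\cref{prop:F_qGroupActionOnF_q^n,prop:A_VGroupAction}), or equivalently from your inclusion $\F_q+\F_q\gamma\subseteq I(W)$. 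This argument is uniform in~$k$ and never needs $W\cap\gamma W\neq 0$; it is the missing idea in your proposal.
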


	The proof is deferred to the end of this section.
	Before setting up the necessary preparation, we draw some further conclusions about the possible values of $\lambda_1$ and $\lambda_2$.
	
	\begin{corollary}\label{cor:2k-4DistIneq}
		Let $U\in \cG_q(k,n)$ generate a full-length orbit with $\ds(\Orb(U))=2k-4$.
		Then the intersection distribution $(\lambda_0,\lambda_1,\lambda_2)$ of $\Orb(U)$ depends only on $q,n,k$, and $f$.
		Further, the following inequalities hold.
		 \begin{equation}\label{eq:Ineq}
		    q \leq \lambda_2 \leq \frac{Q}{q+1}, \qquad 0\leq \lambda_1 \leq Q-q(q+1), \qquad \frac{Q}{q+1} \leq f-1 \leq Q -q^2.
		 \end{equation}
	\end{corollary}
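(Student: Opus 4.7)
The plan is to derive everything from two linear relations among $\lambda_1,\lambda_2,f$ that are forced by the preceding theorem together with Corollary~\ref{cor:Count}. Since $\Orb(U)$ is a full-length orbit ($t=1$, $s=1$) with $\ell=2$, the identities in \eqref{eq:t1sums} specialize to
\begin{equation*}
 f-1=\lambda_1+\lambda_2,\qquad Q=\lambda_1+(q+1)\lambda_2.
\end{equation*}
Subtracting the first from the second gives $Q-(f-1)=q\lambda_2$, so $\lambda_2$ is recovered from $q,k,f$, then $\lambda_1=(f-1)-\lambda_2$ is determined by $q,k,f$, and finally $\lambda_0=|\bP(\F_{q^n})|-1-\lambda_1-\lambda_2=(q^n-1)/(q-1)-f$. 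This proves the first assertion that the intersection distribution depends only on $q,n,k,f$.

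For the inequalities, I would use the explicit formulas of \cref{thm:Dist2k4L2Structure}. The minimum value of $\lambda_2$ is attained either in case~(a) with $r=0$, giving $\lambda_2=q$, or in case~(b) with $r=1$, giving $\lambda_2=q(q+1)\geq q$. This yields the lower bound $\lambda_2\geq q$. The upper bound $\lambda_2\leq Q/(q+1)$ follows immediately from the relation $\lambda_1=Q-(q+1)\lambda_2$ together with $\lambda_1\geq 0$.

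The bounds on $\lambda_1$ then drop out by substitution: $\lambda_1\geq 0$ is a trivial consequence of its definition, while $\lambda_1=Q-(q+1)\lambda_2\leq Q-(q+1)q=Q-q(q+1)$ uses $\lambda_2\geq q$. Similarly, using $f-1=Q-q\lambda_2$, the bounds $q\leq\lambda_2\leq Q/(q+1)$ translate directly into $Q/(q+1)\leq f-1\leq Q-q^2$.

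I do not expect any real obstacle here, as the corollary is essentially an unpacking of the structural theorem: every claim reduces to the two linear identities from \eqref{eq:t1sums} combined with the parametric form $\lambda_2\in\{q+rq(q+1)\mid r\geq 0\}\cup\{rq(q+1)\mid r\geq 1\}$. The only point to verify carefully is that the smallest element of this union is $q$ itself (achieved in case~(a)), since a naive glance at case~(b) alone would give the weaker bound $\lambda_2\geq q(q+1)$.
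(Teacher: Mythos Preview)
Your proposal is correct and follows essentially the same route as the paper: specialize \eqref{eq:t1sums} to $t=1,\ \ell=2$, solve the resulting $2\times 2$ linear system for $\lambda_1,\lambda_2$ in terms of $f$ and $Q$, and then derive all inequalities from $\lambda_1\geq 0$ together with $\lambda_2\geq q$, the latter read off from \cref{thm:Dist2k4L2Structure}. The paper additionally records the divisibility facts $\lambda_2\in q\Z$ and $\lambda_1\in q(q+1)\Z$, but these are not needed for the corollary as stated.
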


	\begin{proof}
		By assumption $\Stab(U)=\F_q^*$, and thus $t=s=1$ in Corollary~\ref{cor:Count}. Hence \cref{eq:t1sums} reduces to
		\begin{equation}\label{eq:fQsumsDist2k4}
		f-1=\lambda_1+\lambda_2 \qquad Q=\lambda_1+ (q+1)\lambda_2.
		\end{equation}
		Because either $\frac{q^k-1}{q-1}$ or $\frac{q^{k-1}-1}{q-1}$ is divisible by $q+1$, we have $Q\in q(q+1)\Z$; see~\eqref{eq:Q}.
		Since \cref{thm:Dist2k4L2Structure} says that $\lambda_2 \in q\Z$, we also have $\lambda_1 \in q(q+1)\Z$ and $(f-1)\in q\Z$.
		In fact \cref{thm:Dist2k4L2Structure} implies the stronger statement that $f-1\in q(q+1)\Z$ if and only if $U$ does not contain any cyclic shift of $\F_{q^2}$.
		Now we can solve this system of equations for $\lambda_1$ and $\lambda_2$ in terms of $q,Q$, and $f$ to get
		\begin{equation}\label{eq:flambda}
			\lambda_1=\frac{1}{q}((q+1)(f-1)-Q) \qquad \lambda_2=\frac{1}{q}(Q-(f-1)).
		\end{equation}
		Both of these values are guaranteed to be in $\Z$ by the above discussion.
		Therefore the intersection distribution of $\Orb(U)$ is completely determined by $q,n,k$ and the value $f=|\cF(U)|$.
        The inequalities of \eqref{eq:Ineq} follow from $\lambda_1\geq 0$ and $\lambda_2 \geq q$ together with~\eqref{eq:fQsumsDist2k4}.
	\end{proof}
	
	The next example shows that equality can be achieved on both sides of~\eqref{eq:Ineq}, with the maximum of~$\lambda_2$ corresponding to the minimum of~$\lambda_1$  and vice versa.
	In other words, there exist subspaces~$U$ where $\ell=2$, $t=1$, and $\dim(U\cap \alpha U) \in \{0,2,k\}$ for all $\alpha \in \F_{q^n}^*$, and hence $\lambda_1=0$.
	Similarly, there exist subspaces~$U$ with $\ell=2, t=1$, and $\lambda_2=q$.
	However, in general the restriction that $\lambda_2 \Mod{q(q+1)} \in \{0,q\}$ means that the upper bound of $Q/(q+1)$ may not be attainable.
	
	\begin{example}\label{ex:NonTrivStab2k4}
		Let $q=3, k=3, n=8$ and let $\gamma$ be primitive in $\F_{3^8}$.
        Then  $\alpha=\gamma^{\frac{3^8-1}{3^2-1}}$ is a  primitive element of $\F_{9} \subseteq \F_{3^8}$ and $\beta=\gamma^{\frac{3^8-1}{3^4-1}}$ is a primitive element of
        $\F_{81} \subseteq \F_{3^8}$.
		Define $U=\langle 1,\alpha, \rho \rangle$ for some $\rho \in \F_{3^8}^*\sm\langle 1,\alpha \rangle$.
		Hence $\F_9\subseteq U$.
		 Since $\gcd(k,n)=1$, the subspace~$U$ generates a full-length orbit for any linearly independent choice of $\rho$.
		There are two possibilities:
		\begin{alphalist}
			\item $\rho \in \F_{81}\sm\F_{9}$
			
			\item $\rho \in \F_{3^8}\sm\F_{81}$.
		\end{alphalist}
		In (a), we have $U\subseteq \F_{81}$, and thus $\frac{v}{w} \in \F_{81}$ for any $v,w \in U\sm0$.
       As a consequence, the only nonzero intersections are of the form $U\cap \beta^s U$.
		Since~$U$ and $\beta^s U$ are both 3-dimensional $\F_3$-subspaces of the 4-dimensional $\F_3$-subspace $\F_{81}$, we conclude $\dim(U\cap\beta^s U)\in \{2,3\}$.
		This leads to
		\[
               \lambda_1=0, \qquad \lambda_2=\frac{Q}{q+1}=39
        \]
		for any such choice of $\rho$.
		The~$39$ elements resulting in a $2$-dimensional intersection are exactly the elements of $\bP(\F_{81})\sm\{\ol{1}\}$.
		\\
		In (b), a computation using SageMath shows that
		\[
                      \lambda_1=Q-q(q+1) =144, \qquad \lambda_2=q=3
        \]
		for any such choice of $\rho$.
		Noting that $\alpha^s \F_9 = \F_9$ for any~$s$ and $\Stab(U)=\F_3$, we conclude that $\F_9 \subseteq U \cap \alpha^s U\subsetneq U$ for any~$s$ such that
       $\alpha^s\not\in\F_3$.
        It follows that $U \cap \alpha^s U=\F_9$ and therefore the three elements that result in a 2-dimensional intersection are exactly the elements of $\bP(\F_{9})\sm\{\ol{1}\}$.
		We will see later in \cref{prop:2k4IntReprNonTrivStab} that this can be generalized.
	\end{example}

	The construction in part (a) of this example generalizes to provide examples of full-length orbit codes where the subspace distance is $2k-2\ell$ for arbitrarily large $\ell$ and $\lambda_i=0$ for small $i$.
	Since computation with SageMath shows that many, if not most, subspaces have $\lambda_1 > \lambda_i$ for $i>1$, these are unusual subspaces.
	The following example generalizes \cref{ex:NonTrivStab2k4}(a).
	
	\begin{example}
		Let $q$ be a prime power and consider a  tower of fields $\F_q \subset \F_{q^s} \subset \F_{q^m} \subset \F_{q^n}$.
		Our goal is to find $U$ so that $U$ has full length orbit but $\lambda_i=0$ for small values of $i$, generalizing \cref{ex:NonTrivStab2k4} (a).
		To this end, let $U$ be a $k$-dimensional $\F_q$-subspace of $\F_{q^m}$ containing $\F_{q^s}$ and that is not an $\F_{q^s}$-vector space.
		As in the previous example, all fractions of elements of~$U$ are in $\F_{q^m}$, and thus any nontrivial intersection $U\cap\alpha U$ arises from some $\alpha\in \F_{q^m}$.
		For any such~$\alpha$ we have $\dim (U+\alpha U)+\dim (U \cap \alpha U) = \dim U+\dim \alpha U$ and thus
		$\dim(U \cap\alpha U) = 2k - \dim(U+\alpha U)$.
		From $k \leq \dim(U+\alpha U) \leq m$ we obtain
		\[
                 2k-m\leq \dim_{\F_q} (U \cap \alpha U) \leq k.
        \]
      Since $\F_{q^s} \subset U$ is fixed by any shift by an element of $\F_{q^s}$  we see that there exist intersections  with $\dim(U \cap \alpha U) \geq s$.
      Because $U$ is not an $\F_{q^s}$-vector space it follows that these intersections are not all of $U$, hence $\ds(\Orb(U)) \leq 2(k-s)$.
       Choose now~$k$ such that $k\geq \frac{s+m}{2}$. Then the above shows that $\lambda_i=0$ for $i=1,\ldots,s-1$ as desired.
		\cref{ex:NonTrivStab2k4}(a) is an example of such a choice of $U$ with $q=3, s=2, m=4,n=8,$ and $k=3$.
	\end{example}
	
	In order to prove \cref{thm:Dist2k4L2Structure} we need to develop a few tools.
	The next definitions make sense even when $\ds(\Orb(U))<2k-4$, so we give the general versions before specializing to the scenario of \cref{thm:Dist2k4L2Structure}.
	
	\begin{definition}\label{def:A_V}
		Let $U\in \cG_q(k,n)$ such that $\ds(\Orb(U))=2k-2\ell$.
		For any subspace $V \subseteq U$ with $\dim(V)=\ell$, we define $A_V=\{\ol{\alpha}\in \bP(\F_{q^n}) \mid V \subseteq U \cap \alpha U \}$.
	\end{definition}

      Note that $A_V=\cS(U) \cup\{\ol{\alpha}\in \bP(\F_{q^n}) \mid V =U \cap \alpha U \}$, where $\cS(U)$ is as in \cref{def:UAssoc}.
      Then
      \begin{equation}\label{eq:AVcalS}
           \cS(U) \subsetneq A_V\Longleftrightarrow V=U\cap \alpha U\text{ for some }\alpha\in\F_{q^n}.
      \end{equation}
     We are, of course, interested in the case that $V$ arises as a maximal dimension intersection $U\cap \alpha U$ for some $\alpha$.
	To this end, we introduce a group action of $\F_q$ on $\F_{q^n}$.

	\begin{proposition}\label{prop:F_qGroupActionOnF_q^n}
		The map
        \[
              \varphi: (\F_{q^n} \sm \F_{q}) \times \F_q \longrightarrow (\F_{q^n}\sm\F_q),\quad (\alpha,\lambda)\longmapsto\frac{\alpha}{1+\lambda\alpha},
        \]
        is well-defined and satisfies the following properties.
		\begin{alphalist}
			\item The map $\varphi$ is a group action of $\F_q$ on $\F_{q^n}\sm\F_{q}$.
            \item Let $\F_{q^t}$ be a subfield of~$\F_{q^n}$ and $\lambda\in\F_q$. Then $\varphi(\alpha,\lambda)\in\F_{q^t}\sm\F_q\Longleftrightarrow\alpha\in\F_{q^t}\sm\F_q$.
			\item  $|\Orb_\varphi(\alpha)|=q$ for all $\alpha\in\F_{q^n}\sm\F_q$, and thus $\F_{q^t}\sm\F_q$ is the disjoint union of $q^{t-1}-1$ orbits for any divisor~$t$ of~$n$.
			\item For any $\alpha\in\F_{q^n}\sm\F_q$ the set $\ol{\Orb_\varphi(\alpha)}=\{\ol{\frac{\alpha}{1+\lambda\alpha}} \mid \lambda \in \F_q \}$ has cardinality $q$.
			\item Let $\alpha\in\F_{q^n}\sm\F_q$ and $\beta = \rho \alpha$ for some $\rho \in \F_q^*$. Then $\varphi(\beta,\lambda)=\rho\varphi(\alpha,\rho\lambda)$.
                     As a consequence, the set $\ol{\Orb_{\varphi}(\alpha)}$ depends only on the projective equivalence class $\ol{\alpha}$.
            \item $\bP(\F_{q^n})\sm\{\ol{1}\}$ is the disjoint union of $(q^{n-1}-1)/(q-1)$ sets of the form $\ol{\Orb_\varphi(\alpha)}$.
 		\end{alphalist}
	\end{proposition}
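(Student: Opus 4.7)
The plan is to dispatch well-definedness first and then address each part in order, with (f) being the capstone. For well-definedness I must check two things. First, $1+\lambda\alpha\neq 0$: if $\lambda\neq 0$ then $-1/\lambda\in\F_q$, contradicting $\alpha\notin\F_q$. Second, the output avoids $\F_q$: setting $\alpha/(1+\lambda\alpha)=\mu\in\F_q$ and solving for $\alpha$ gives $\alpha(1-\mu\lambda)=\mu$, which forces $\alpha\in\F_q$ (the corner case $1-\mu\lambda=0$ yields $\mu=0$, hence $\alpha=0$). For~(a), $\varphi(\alpha,0)=\alpha$ gives the identity axiom, while the composition axiom reduces via a compound-fraction simplification to $\varphi(\varphi(\alpha,\lambda),\mu)=\alpha/(1+(\lambda+\mu)\alpha)=\varphi(\alpha,\lambda+\mu)$. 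Part~(b) is immediate because all operations stay inside $\F_{q^t}$, and the converse uses the inverse action $\varphi(-,-\lambda)$.

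For~(c), orbit sizes divide $|\F_q|=q$, so they are $1$ or $q$. Solving $\varphi(\alpha,\lambda)=\alpha$ collapses to $\lambda\alpha^2=0$, forcing $\lambda=0$, so all orbits have size exactly $q$; since $|\F_{q^t}\sm\F_q|=q^t-q$ by~(b), this yields $q^{t-1}-1$ orbits. For~(d), I check that the $q$ elements $\alpha/(1+\lambda\alpha)$ remain distinct modulo $\F_q^*$: setting $\alpha/(1+\lambda\alpha)=\mu\alpha/(1+\lambda'\alpha)$ with $\mu\in\F_q^*$ and clearing denominators gives $(1-\mu)+(\lambda'-\mu\lambda)\alpha=0$, whose coefficients must vanish since $\alpha\notin\F_q$. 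Part~(e) follows from $\varphi(\rho\alpha,\lambda)=\rho\alpha/(1+\rho\lambda\alpha)=\rho\,\varphi(\alpha,\rho\lambda)$ together with the fact that $\lambda\mapsto\rho\lambda$ permutes~$\F_q$; after scalars are absorbed, $\ol{\Orb_\varphi(\rho\alpha)}=\ol{\Orb_\varphi(\alpha)}$.

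The main step is~(f). Covering is clear since $\ol{\alpha}\in\ol{\Orb_\varphi(\alpha)}$ for every $\alpha\in\F_{q^n}\sm\F_q$. For disjointness, suppose $\ol{\beta}\in\ol{\Orb_\varphi(\alpha)}$, so $\beta=\rho\,\varphi(\alpha,\lambda)$ for some $\rho\in\F_q^*$, $\lambda\in\F_q$. Part~(e) gives $\ol{\Orb_\varphi(\beta)}=\ol{\Orb_\varphi(\varphi(\alpha,\lambda))}$, and because the genuine $\F_q$-orbits from~(a) partition $\F_{q^n}\sm\F_q$, this equals $\ol{\Orb_\varphi(\alpha)}$. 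Since each projectivized orbit has size $q$ by~(d) and the total size is $|\bP(\F_{q^n})\sm\{\ol{1}\}|=(q^n-q)/(q-1)$, there are exactly $(q^{n-1}-1)/(q-1)$ sets in the partition. The main obstacle I anticipate is juggling the two equivalence relations---$\F_q$-action orbits and projective equivalence---and making sure the projectivized orbits do not collapse; the computation in~(d) is the key non-trivial check that scaling and action combine without producing unexpected coincidences, and~(e) is what glues these together into a well-posed partition statement in~(f).
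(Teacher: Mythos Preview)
Your proposal is correct and follows essentially the same approach as the paper's proof: the same well-definedness check, the same compound-fraction verification for~(a), the same linear-independence argument for~(d), and the same computation for~(e). Your treatment of~(f) is arguably slightly cleaner---you argue directly that $\ol{\beta}\in\ol{\Orb_\varphi(\alpha)}$ forces $\ol{\Orb_\varphi(\beta)}=\ol{\Orb_\varphi(\alpha)}$, whereas the paper reaches the partition via a counting argument (the $q^{n-1}-1$ disjoint orbits on $\F_{q^n}\sm\F_q$ collapse $(q-1)$-to-one under projectivization)---but the underlying ideas are identical.
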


It should be noted that the map~$\varphi$ is not a well-defined map on equivalence classes in projective space, yet it leads to a disjoint union of projectivized orbits.

	\begin{proof}
		Since $\alpha \not \in \F_q$, we have $1+\lambda\alpha \neq 0$ for any $\lambda \in \F_q$.
		Further, suppose $\varphi(\alpha,\lambda)\in\F_q$, say
        $\frac{\alpha}{1+\lambda\alpha}=\mu \in \F_q$.
        Then $\alpha(1-\lambda\mu)=\mu$ and either $\alpha = \frac{\mu}{1-\mu\lambda}\in\F_q$ or $1-\lambda\mu=0$ and $\mu=0$.
		Since both are a contradiction,~$\varphi$ is well-defined.
		It remains to prove (a)--(f).
		\\
		(a) One straightforwardly verifies that $\varphi(\varphi(\alpha,\lambda),\mu)=\varphi(\alpha,\lambda+\mu)$.
		\\
        (b) Suppose $\varphi(\alpha,\lambda)\in\F_{q^t}\sm\F_q$, say $\frac{\alpha}{1+\lambda\alpha}=\mu \in \F_{q^t}\sm\F_q$.
        Then with the same reasoning as above we conclude $\alpha\in\F_{q^t}\sm\F_q$.
             The converse is trivial.
        \\
		(c) $\varphi(\alpha,\lambda)=\varphi(\alpha,\mu)$ implies $\lambda=\mu$ (since $\alpha\neq0$).
		Hence $|\Orb_{\varphi}(\alpha)|=q$, and the rest is clear.
		\\
		(d) We want to show that the cardinality of an orbit under $\varphi$ is preserved by passing to projective space.
		So suppose that $\ol{\varphi(\alpha,\lambda)}=\ol{\varphi(\alpha,\mu)}$.
		Then we have $1+\lambda\alpha=\rho(1+\mu\alpha)$ for some $\rho \in \F_q^*$.
		Since $\alpha \not \in \F_q$, $\{1,\alpha\}$ is $\F_q$-linearly independent and so we have $\rho =1$ and $\lambda=\mu$.
		\\
		(e) Suppose $\beta = \rho \alpha$ for some $\rho \in \F_q^*$.
		Then $\varphi(\beta,\lambda)=\frac{\beta}{1+\rho\lambda\alpha}=\rho\frac{\alpha}{1+\rho\lambda\alpha}=\rho\varphi(\alpha,\rho\lambda)$.
        \\
        (f)
        By~(d)  $\Orb_{\varphi}(\alpha)\cap\Orb_{\varphi}(\rho\alpha)=\emptyset$ for all $\rho\in\F_q^*\setminus\{1\}$ and $\alpha\in\F_{q^n}\sm\F_q$.
        On the other hand, thanks to~(e),
         $\ol{\Orb_{\varphi}(\alpha)}=\ol{\Orb_{\varphi}(\rho\alpha)}$ for all $\rho\in\F_q^*$.
         Since $|\bP(\F_{q^n})\setminus\{\ol{1}\}|=(q^n-q)/(q-1)$, all of this together with~(d) shows that
         the $q^{n-1}-1$ disjoint orbits covering $\F_{q^n}\sm\F_q$ collapse to $(q^{n-1}-1)/(q-1)$ projectivized orbits covering $\bP(\F_{q^n})\setminus\{\ol{1}\}$.
	\end{proof}

	We next show that the sets~$A_V$ decompose into the projectivized versions of the orbits of $\varphi$.
	
	\begin{proposition}\label{prop:A_VGroupAction}
		Let $U \in \cG_q(k,n)$ such that $\ds(\Orb(U))=2k-2\ell$ and let $\Stab(U)=\F_{q^t}^*$.
       Furthermore, let $V\subseteq U$  with $\dim(V)=\ell$. Then we have the following.
		\begin{alphalist}
			\item If $\ol{\alpha} \in A_V$, then $\ol{\varphi(\alpha,\lambda)} \in A_V$ for each $\lambda\in \F_q$.
            \item Suppose $V=U\cap\alpha U$ for some $\alpha\in\F_{q^n}$.
                     Write $A_V=\cS(U) \cup\hspace*{-.7em}\raisebox{.5ex}{$\cdot{}$}\hspace*{.45em} \cA$, where $\cS(U)$ is as in \cref{def:UAssoc}
                      and $\cA:=\{\ol{\beta}\mid V=U\cap \beta U\}$. Then $\cA$ is a disjoint union of projectivized orbits $\ol{\Orb_\varphi(\beta)}$.
			          Thus $|A_V| = aq + \frac{q^t-1}{q-1}$ for some $a\in\N$ and in particular $|A_V|\geq q+1$.
			\item If $\ol{\beta} = \ol{\varphi(\alpha,\lambda)} \in A_V$ for $\lambda \neq 0$, then $\ol{\beta\alpha^{-1}} = \ol{\varphi(\alpha^{-1},\lambda^{-1})} \in A_{\alpha^{-1}V}$.
		\end{alphalist}
	\end{proposition}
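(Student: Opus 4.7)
My plan is to establish (a) first and then use it as a lever to prove both (b) and (c). For (a), pick any $v\in V$. Since $V\subseteq U\cap\alpha U$, write $v=\alpha u$ for some $u\in U$. The direct computation
\[
\varphi(\alpha,\lambda)(u+\lambda v)=\frac{\alpha(u+\lambda v)}{1+\lambda\alpha}=\frac{v+\lambda\alpha v}{1+\lambda\alpha}=v,
\]
together with $u+\lambda v\in U$, shows $v\in\varphi(\alpha,\lambda)U$. Combined with the trivial inclusion $V\subseteq U$, this yields $\ol{\varphi(\alpha,\lambda)}\in A_V$.

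For (b), I want to show that for every $\ol{\beta}\in\cA$ the full projectivized orbit $\ol{\Orb_\varphi(\beta)}$ sits inside $\cA$. Part (a) gives the inclusion $V\subseteq U\cap\varphi(\beta,\lambda)U$ for every $\lambda\in\F_q$, and I promote this to equality using the maximum-intersection-dimension hypothesis, which bounds $\dim(U\cap\gamma U)\leq\ell$ whenever $\gamma\notin\F_{q^t}$. Since $\ol{\beta}\in\cA$ forces $V=U\cap\beta U\subsetneq U$ and hence $\beta\notin\F_{q^t}$, Proposition~\ref{prop:F_qGroupActionOnF_q^n}(b) ensures $\varphi(\beta,\lambda)\notin\F_{q^t}$, so the dimension bound applies and gives $V=U\cap\varphi(\beta,\lambda)U$, i.e., $\ol{\varphi(\beta,\lambda)}\in\cA$. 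Because the projectivized orbits partition $\bP(\F_{q^n})\sm\{\ol{1}\}$ by Proposition~\ref{prop:F_qGroupActionOnF_q^n}(f), $\cA$ is automatically a disjoint union of such orbits, each of size $q$ by Proposition~\ref{prop:F_qGroupActionOnF_q^n}(d); this gives $|\cA|=aq$ and hence $|A_V|=aq+(q^t-1)/(q-1)$. The hypothesis $V=U\cap\alpha U$ places $\ol{\alpha}$ in $\cA$, so $a\geq 1$ and $|A_V|\geq q+1$. The subtle point is the dimension upgrade, which is exactly why Proposition~\ref{prop:F_qGroupActionOnF_q^n}(b) (preservation of non-membership in $\F_{q^t}$ under $\varphi$) is indispensable.

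For (c), I start with the algebraic identity. Writing $\beta=\rho\alpha/(1+\lambda\alpha)$ for some $\rho\in\F_q^*$ and noting $\varphi(\alpha^{-1},\lambda^{-1})=\lambda/(1+\lambda\alpha)$ after clearing denominators, I obtain $\beta\alpha^{-1}=\rho/(1+\lambda\alpha)=(\rho/\lambda)\varphi(\alpha^{-1},\lambda^{-1})$, which (using $\lambda\neq 0$) proves $\ol{\beta\alpha^{-1}}=\ol{\varphi(\alpha^{-1},\lambda^{-1})}$. For the membership $\ol{\beta\alpha^{-1}}\in A_{\alpha^{-1}V}$ I first need $\alpha^{-1}V\subseteq U$; picking the representative $\beta=\varphi(\alpha,\lambda)$, the group-action identity $\varphi(\beta,-\lambda)=\alpha$ combined with (a) applied to $\ol{\beta}\in A_V$ yields $\ol{\alpha}\in A_V$, hence $V\subseteq\alpha U$ and $\alpha^{-1}V\subseteq U$. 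The remaining condition $\alpha^{-1}V\subseteq\beta\alpha^{-1}U$ is equivalent to $V\subseteq\beta U$, which is immediate from $\ol{\beta}\in A_V$.
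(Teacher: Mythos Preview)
Your proof is correct and follows essentially the same route as the paper. The only cosmetic difference is in part (c): the paper first shows $\ol{\alpha^{-1}}\in A_{\alpha^{-1}V}$ (from $\alpha^{-1}V\subseteq U\cap\alpha^{-1}U$) and then applies (a) with the shift $\lambda^{-1}$ to land on $\ol{\varphi(\alpha^{-1},\lambda^{-1})}$, whereas you verify $\alpha^{-1}V\subseteq U\cap\beta\alpha^{-1}U$ directly; both arguments hinge on the same group-action identity and the same use of (a) to get $\ol{\alpha}\in A_V$.
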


	\begin{proof}
		(a) Suppose $V\subseteq U\cap \alpha U$ and $\lambda \in \F_q$.
		We have to show that $V\subseteq U\cap \varphi(\alpha,\lambda)U$.
		Let $\{v_1,\ldots,v_m \}$ be a basis of $V$.
		Since $V \subset U \cap \alpha U$ there exist $\{u_1,\ldots,u_m\} \subset U$ such that $v_i=\alpha u_i$ for $i=1,\ldots,\ell$.
		Then $u_i+\lambda v_i \in U$ and
		\[ \varphi(\alpha,\lambda)(u_i+\lambda v_i)=\frac{\alpha u_i(1+\lambda \alpha)}{1+\lambda \alpha}=v_i. \]
		Therefore $V \subseteq U \cap \varphi(\alpha,\lambda) U$.
        \\
        (b) Let $V = U \cap \alpha U$ and $\lambda\in\F_q$. Then~(a) implies $V \subseteq U \cap \varphi(\alpha,\lambda) U$.
        Since~$\dim(V)=\ell$ is the maximum
         possible intersection dimension, we conclude that either $V= U \cap \varphi(\alpha,\lambda) U$ or $U \cap \varphi(\alpha,\lambda) U=U$.
         In the latter case $\varphi(\alpha,\lambda)\in\Stab(U)=\F_{q^t}^*$, and thus by Proposition~\ref{prop:F_qGroupActionOnF_q^n}(b) also $\alpha\in\F_{q^t}^*$, which is a contradiction.
         Thus $V= U \cap \varphi(\alpha,\lambda) U$.
         All of this shows that if $\ol{\alpha}\in\cA$, then $\ol{\varphi(\alpha,\lambda)}\in\cA$.
         With the aid of Proposition~\ref{prop:F_qGroupActionOnF_q^n}(f) we obtain that~$\cA$ is a disjoint union of projectivized orbits.
         The rest is clear.
         \\
		(c) Without loss of generality $\beta = \varphi(\alpha,\lambda)$. By assumption $V\subseteq U\cap \beta U$.
		Because $\alpha \in \Orb_{\varphi}(\beta)$, Part~(a) shows that $V\subseteq U\cap \alpha U$.
		Therefore $\alpha^{-1} V\subseteq U \cap \alpha^{-1} U$ and so $\ol{\alpha^{-1}} \in A_{\alpha^{-1}V}$.
		By~(a) again $\ol{\varphi(\alpha^{-1},\lambda^{-1})} \in A_{\alpha^{-1}V}$.
		Now
		\[ \beta\alpha^{-1}=\alpha^{-1}\varphi(\alpha,\lambda)=\frac{1}{1+\lambda\alpha}=\lambda^{-1}\frac{\alpha^{-1}}{1+\lambda^{-1}\alpha^{-1}}=\lambda^{-1}\varphi(\alpha^{-1},\lambda^{-1}). \]
		Since $\lambda^{-1}\in \F_q$, we conclude $\ol{\beta\alpha^{-1}}=\ol{\varphi(\alpha^{-1},\lambda^{-1})}$ and the claim follows.		
	\end{proof}

We now focus on subspaces~$U$ that generate a full-length orbit code (i.e., $\Stab(U)=\F_q^*$) and satisfy $\ds(\Orb(U))=2k-4$.
We will show next that in this case  $|A_V|=q+1$ for any $2$-dimensional intersection $V=U\cap\alpha U$; that is, $A_V$ is the union of $\{\ol{1}\}$ and a single projectivized orbit.
There are two possibilities for a $2$-dimensional subspace~$V$ over $\F_q$: either $V=\gamma\F_{q^2}$ for some $\gamma \in \F_{q^n}$ (hence~$n$ is even) or~$V$ itself has full-length orbit.
In the first case, we can explicitly describe $A_V$.
	
	\begin{proposition}\label{prop:2k4IntReprNonTrivStab}
		Let $n$ be even and $U\in \cG_q(k,n)$ generate a full-length orbit with $\ds(\Orb(U))=2k-4$.
		Suppose there exist $\alpha,\gamma\in\F_{q^n}^*$ such that $V:=\gamma\F_{q^2}=U \cap \alpha U$.
	  Then $A_V=\bP(\F_{q^2})$ and thus $|A_V|=q+1$.
	\end{proposition}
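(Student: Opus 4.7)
My plan is to prove both inclusions $\bP(\F_{q^2})\subseteq A_V$ and $A_V\subseteq\bP(\F_{q^2})$; since both sets have cardinality $q+1$, equality will then follow.

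For $\bP(\F_{q^2})\subseteq A_V$, I note that $V=\gamma\F_{q^2}$ is stable under multiplication by any $\beta\in\F_{q^2}^*$, so $\beta^{-1}V=V\subseteq U$, giving $V\subseteq\beta U$. Together with $V\subseteq U$ this yields $V\subseteq U\cap\beta U$, i.e., $\ol{\beta}\in A_V$.

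For the reverse inclusion I translate the defining condition. Since $V\subseteq U$ automatically, $\ol{\beta}\in A_V$ is equivalent to $\beta^{-1}\gamma\F_{q^2}\subseteq U$. This suggests introducing the map
\[
\Phi:A_V\longrightarrow\{\text{1-dim }\F_{q^2}\text{-subspaces of }\F_{q^n}\text{ contained in }U\},\quad\ol{\beta}\longmapsto\beta^{-1}\gamma\F_{q^2},
\]
which is well-defined because $\F_q^*\subseteq\F_{q^2}^*$. The fiber over a fixed $\eta\F_{q^2}$ consists of those $\ol{\beta}$ with $\beta\in\gamma\eta^{-1}\F_{q^2}^*$, and modulo $\F_q^*$ this contains exactly $|\F_{q^2}^*/\F_q^*|=q+1$ classes. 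Thus $|A_V|=(q+1)\cdot|\operatorname{Im}\Phi|$.

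The crux, and what I expect to be the main obstacle, is to show $|\operatorname{Im}\Phi|=1$, i.e., that $V$ is the \emph{unique} $1$-dimensional $\F_{q^2}$-subspace of $\F_{q^n}$ contained in $U$. Suppose for contradiction that $V'=\gamma'\F_{q^2}\subseteq U$ with $V'\neq V$. Two distinct $1$-dim $\F_{q^2}$-subspaces meet trivially, so $V\cap V'=0$ and $V+V'$ has $\F_q$-dimension~$4$. If $k=3$ this already contradicts $V+V'\subseteq U$. If $k\geq 4$, then $V+V'$ is itself an $\F_{q^2}$-subspace of $\F_{q^n}$ (being the sum of two such), so for any $c\in\F_{q^2}\setminus\F_q$ we have $c(V+V')=V+V'\subseteq U$, hence $V+V'\subseteq U\cap cU$ and therefore $\dim(U\cap cU)\geq 4$. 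But $c\notin\F_q^*=\Stab(U)$ implies $cU\neq U$, while the hypothesis $\ds(\Orb(U))=2k-4$ forces $\dim(U\cap cU)\leq 2$, a contradiction.

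Once this uniqueness is established, the image of $\Phi$ is exactly $\{V\}$, so $|A_V|=(q+1)\cdot 1=q+1=|\bP(\F_{q^2})|$, and combined with the first inclusion this yields $A_V=\bP(\F_{q^2})$.
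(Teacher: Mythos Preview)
Your proof is correct and follows essentially the same approach as the paper. Both arguments hinge on the same key observation: $U$ can contain at most one $\F_{q^2}$-line, since two distinct ones would span a $4$-dimensional $\F_{q^2}$-invariant subspace of~$U$, forcing $\dim(U\cap cU)\geq 4$ for $c\in\F_{q^2}\setminus\F_q$ and contradicting $\ds(\Orb(U))=2k-4$. The only difference is packaging: the paper first reduces to $\gamma=1$ and then proves the inclusion $A_V\subseteq\bP(\F_{q^2})$ directly, whereas you keep $\gamma$ general and reach equality via the fiber count $|A_V|=(q+1)\cdot|\operatorname{Im}\Phi|$ combined with $\bP(\F_{q^2})\subseteq A_V$; the substance is identical.
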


	\begin{proof}
		We can reduce to the case $\gamma=1$ since any $\beta\in\F_{q^n}^*$ satisfies
		\begin{equation}\label{eq:Fq^2ShiftIntEquiv}
			\gamma\F_{q^2}=U\cap \beta U \Longleftrightarrow \F_{q^2}=\gamma^{-1} U\cap \beta\gamma^{-1} U
			= U'\cap \beta U',
		\end{equation}
		where $U'=\gamma^{-1}U$.
		Notice that $\Orb(U')=\Orb(U)$.
		If we can show that
			\[\{\ol{\beta} \mid \F_{q^2}\subseteq U'\cap \beta U' \} = \bP(\F_{q^2}), \]
		the claim follows from \eqref{eq:Fq^2ShiftIntEquiv}.
		Thus we assume that $V=\F_{q^2}=U\cap\alpha U$.
		
		In order to show $A_V\subseteq  \bP(\F_{q^2})$ suppose to the contrary that there exists $\beta \in A_V\sm \bP(\F_{q^2})$.
		Then $\F_{q^2}=U\cap\beta U$ and $\beta^{-1} \F_{q^2} \cap \F_{q^2}=\{0\}$.
		Furthermore,
		\[
		      \F_{q^2} \subset U\ \text{ and }\ \beta^{-1}\F_{q^2} \subset U.
		 \]
		Therefore $k\geq 4$ and we can decompose~$U$ as $U=\F_{q^2}\oplus \beta^{-1}\F_{q^2} \oplus U'$ for some $U' \in \cG_q(k-4,n)$.
		Now we have for any $\rho \in \F_{q^2}\sm\F_q$
		\[
		       \rho^{-1}(\F_{q^2}\oplus \beta^{-1}\F_{q^2})=\F_{q^2}\oplus \beta^{-1}\F_{q^2}\subseteq U,
		 \]
		so $\F_{q^2}\oplus \beta^{-1}\F_{q^2}\subseteq U\cap\rho U$ and thus $\dim(U\cap \rho U)\geq 4$.
		Since $\rho \not \in \Stab(U)=\F_q^*$, this  contradicts $\ds(\Orb(U))=2k-4$.
		All of this shows that $A_V \subseteq \bP(\F_{q^2})$.
		
		In the same way, since $\F_{q^2}\subseteq U$, every $\ol{\rho} \in \bP(\F_{q^2})$ leads to
		$\F_{q^2} \subseteq U \cap \rho U$, and thus
		$\ol{\rho}\in A_V$. Hence $\bP(\F_{q^2})\subseteq A_V$, and this concludes the proof.
	\end{proof}

\begin{rem}\label{rem:Vell}
The above result generalizes straightforwardly to full-length orbits with distance $2k-2\ell$ and intersections of the form
$V=\gamma\F_{q^\ell}=U\cap\alpha U$. In that case one arrives at $A_V=\bP(\F_{q^\ell})$.
\end{rem}
	
	The line of argument in the first part of the above proof can be extended to show that there is at most one such~$V$
	arising as an intersection $U\cap \alpha U$.
	
	\begin{proposition}\label{prop:NonTrivStab2dimIntCount2k4}
		Let  $U\in \cG_q(k,n)$ generate a full-length orbit with $\ds(\Orb(U))=2k-4$.
		Then there exists at most one subspace $V\in \cG_q(2,n)$ such that
		 $V=U\cap \alpha U$ for some $\alpha \in \F_{q^n}^*$ and
		$V=\gamma\F_{q^2}$ for some $\gamma \in \F_{q^n}^*$.
	\end{proposition}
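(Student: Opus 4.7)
The plan is to argue by contradiction. Suppose there exist two distinct subspaces $V_1,V_2\in\cG_q(2,n)$ of the stated form, so $V_i=U\cap\alpha_i U=\gamma_i\F_{q^2}$ for $i=1,2$ and appropriate $\alpha_i,\gamma_i\in\F_{q^n}^*$. The key structural observation is that each $V_i$ is a one-dimensional $\F_{q^2}$-subspace of $\F_{q^n}$. Hence if $V_1\cap V_2\neq 0$, any nonzero element of the intersection can be written as $\gamma_1 a=\gamma_2 b$ with $a,b\in\F_{q^2}^*$, yielding $\gamma_2=\gamma_1 a b^{-1}$ and consequently $V_2=\gamma_2\F_{q^2}=\gamma_1\F_{q^2}=V_1$, contradicting $V_1\neq V_2$. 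Therefore $V_1\cap V_2=\{0\}$ and $\dim_{\F_q}(V_1+V_2)=4$, which in particular forces $k\geq 4$.

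The second step is to exploit the $\F_{q^2}$-stability of each $V_i$. For every $\rho\in\F_{q^2}^*$ we have $\rho V_i=\rho\gamma_i\F_{q^2}=\gamma_i\F_{q^2}=V_i$, and hence $\rho(V_1+V_2)=V_1+V_2$. Since $V_1+V_2\subseteq U$, this shows that $V_1+V_2\subseteq U\cap\rho U$ for all $\rho\in\F_{q^2}^*$.

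Now I pick $\rho\in\F_{q^2}\setminus\F_q$, which is possible because $n$ is even (as $\F_{q^2}\subseteq\F_{q^n}$). Since $\Orb(U)$ is a full-length orbit, $\Stab(U)=\F_q^*$, so $\rho\notin\Stab(U)$ and therefore $\rho U\neq U$. The conclusion of the previous step gives $\dim(U\cap\rho U)\geq 4$, but the hypothesis $\ds(\Orb(U))=2k-4$ says the maximum intersection dimension is $\ell=2$. This is the desired contradiction, establishing uniqueness.

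The argument mirrors the technique used in the second paragraph of the proof of \cref{prop:2k4IntReprNonTrivStab}, where stability under $\F_{q^2}^*$-scaling together with the maximality of $\ell=2$ was used to rule out configurations forcing an intersection of dimension at least $4$. I do not expect a serious obstacle here; the main point is simply to package the observation that distinct one-dimensional $\F_{q^2}$-subspaces of $\F_{q^n}$ must intersect trivially, and then leverage the $\F_{q^2}$-stability of their sum against the upper bound $\ell=2$ on intersection dimension.
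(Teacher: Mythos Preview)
Your proof is correct and follows essentially the same argument as the paper's: assume two distinct such intersections $V_1,V_2$ exist, observe they intersect trivially so that $V_1\oplus V_2$ is a $4$-dimensional $\F_{q^2}$-stable subspace of~$U$, and then use any $\rho\in\F_{q^2}\setminus\F_q$ to produce $\dim(U\cap\rho U)\geq 4$, contradicting $\ell=2$. The only cosmetic difference is that you spell out the one-dimensional $\F_{q^2}$-subspace reasoning for $V_1\cap V_2=\{0\}$ more explicitly than the paper does.
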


	\begin{proof}
		If $n$ is odd, no such subspace exists, so let $n$ be even.
		Suppose to the contrary that there exist distinct subspaces $V_1,\,V_2$ such that
		$V_1=U\cap \alpha_1 U = \gamma_1 \F_{q^2}$ and $V_2=U\cap \alpha_2U =\gamma_2 \F_{q^2}$.
		We will show that $\ds(\Orb(U))<2k-4$.
		
		Since $V_1 \neq V_2$, we must have $\frac{\gamma_1}{\gamma_2}\not \in \F_{q^2}$ and even
		$\gamma_1\F_{q^2}\cap \gamma_2 \F_{q^2}=\{0\}$.
	    Now $\gamma_1\F_{q^2}\subseteq U$ and $\gamma_2\F_{q^2}\subseteq U$ implies
		$U=\gamma_1\F_{q^2}\oplus \gamma_2\F_{q^2}\oplus U'$ for some $U' \in \cG_q(k-4,n)$.
		This in turn leads to $\gamma_1\F_{q^2}\oplus \gamma_2\F_{q^2} \subseteq U \cap \rho U$ for any $\rho\in \F_{q^2} \sm\F_{q}$.
		Since $\rho\not\in\Stab(U)$, we arrive at the contradiction $\ds(\Orb(U))\leq 2k-8<2k-4$.
	\end{proof}
	
	It remains to describe the behavior when a maximal intersection $V = U \cap \alpha U$ has full-length orbit.
	In this case it turns out that there is a collection of related subspaces $\{V_i\}$ that all have associated sets $A_{V_i}$ of the same cardinality.
	Further, each $V_i$ is given by the cyclic shift $\alpha_i^{-1} V$ for some $\alpha_i \in A_{V}$ and we can explicitly describe the elements of $A_{V_i}$ in terms of those of $A_V$.
	This holds even in the more general setting where $\ds(\Orb(U))=2k-2\ell$.
	Although we will give the proof for the case $\ell=2$, the general case does not differ significantly.
	Therefore \cref{prop:MaxIntFullLenOrbitStructure} can be easily extended to the general case where $U$ has full-length orbit with $\ds(\Orb(U))=2k-2\ell$ and $V=U\cap \alpha U\in \cG_q(\ell,n)$
    such that $\Stab(V)=\F_q^*$.
	In particular, if $\gcd(\ell,n)=1$ then $\Stab(V)=\F_q^*$ for \emph{every} maximal intersection $V=U\cap \alpha U$ and the proposition applies.
	
	For the following result recall from \cref{def:A_V} that if $U$ generates a full-length orbit then
	$A_V=\{\ol{1}\}\cup\{\ol{\beta}\mid V=U\cap\beta U\}$.
	
	\begin{proposition}\label{prop:MaxIntFullLenOrbitStructure}
		Let $U \in \cG_q(k,n)$ generate a full-length orbit with $\ds(\Orb(U))=2k-4$.
		Furthermore, suppose there exists $V \in \cG_q(2,n)$ such that $V=U \cap \alpha_1 U$ for some $\alpha_1 \in \F_{q^n}^*$
		and~$V$ generates a full-length orbit.
		Let $|A_V|=s+1$ and write $A_V = \{\ol{\alpha_1},\ldots,\ol{\alpha_s},\ol{1} \}$.
		\begin{alphalist}
		\item The distinct cyclic shifts of $V$ that arise as intersections $U\cap \beta U$ are precisely the shifts by
			elements $\{\ol{\alpha_i^{-1}} \mid i=1,\ldots,s \} \cup \{\ol{1}\}$.
			In particular, there are $|A_V|=s+1$ such shifts.
		\item For each $i=1,\ldots,s$ we have
		        $A_{\alpha_i^{-1}V}=\{ \ol{\alpha_j\alpha_i^{-1}} \mid j=1,\ldots,s \} \cup \{ \ol{\alpha_i^{-1}} \}$ and
		        $|A_{\alpha_i^{-1}V}|=s+1$.
		\end{alphalist}
	\end{proposition}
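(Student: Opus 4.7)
The plan is to exploit two facts throughout: (i) because $\dim V = 2$ equals the maximum intersection dimension of $\Orb(U)$ and $\Stab(U)=\F_q^*$, any inclusion $V\subseteq U\cap \gamma U$ with $\ol{\gamma}\neq\ol{1}$ is forced to be an equality $V = U\cap \gamma U$; and (ii) because $V$ generates a full-length orbit, $\Stab(V)=\F_q^*$, so distinct projective classes in $\bP(\F_{q^n})$ produce distinct cyclic shifts of $V$. As a preliminary consequence, for each $i=1,\ldots,s$ the inclusion $V\subseteq U\cap \alpha_i U$ coming from $\ol{\alpha_i}\in A_V\sm\{\ol{1}\}$ is actually the equality $V = U\cap \alpha_i U$.

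For part~(a), I would first verify that each element of $\{\ol{1}\}\cup\{\ol{\alpha_i^{-1}}\mid i=1,\ldots,s\}$ yields a cyclic shift of $V$ arising as an intersection: multiplying $V = U\cap \alpha_i U$ by $\alpha_i^{-1}$ gives $\alpha_i^{-1}V = \alpha_i^{-1}U\cap U$, and $V$ itself arises via $V = U\cap \alpha_1 U$. Distinctness of the $s+1$ resulting shifts follows from $\Stab(V)=\F_q^*$ together with the fact that the $\ol{\alpha_i}$ are distinct and none equals~$\ol{1}$. For the converse, suppose $\beta V = U\cap \gamma U$ for some $\gamma\in\F_{q^n}^*$. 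Then $V\subseteq \beta^{-1}U$, and together with $V\subseteq U$ this yields $V\subseteq U\cap \beta^{-1}U$. If $\ol{\beta}=\ol{1}$ the shift is $V$ itself; otherwise $\beta^{-1}\notin\Stab(U)$, so fact~(i) upgrades the inclusion to an equality and $\ol{\beta^{-1}}\in A_V\sm\{\ol{1}\}$, whence $\ol{\beta}=\ol{\alpha_i^{-1}}$ for some~$i$.

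For part~(b), I would unpack the definition: $\ol{\beta}\in A_{\alpha_i^{-1}V}$ iff both $\alpha_i^{-1}V \subseteq U$ and $\alpha_i^{-1}V\subseteq \beta U$. The first inclusion is automatic from $V\subseteq \alpha_iU$. The second is equivalent to $V\subseteq \alpha_i\beta U$ and, combined with $V\subseteq U$, to $V\subseteq U\cap \alpha_i\beta U$. Fact~(i) then splits into two cases: either $\alpha_i\beta\in\F_q^*$, which gives $\ol{\beta}=\ol{\alpha_i^{-1}}$, or $\ol{\alpha_i\beta}\in A_V\sm\{\ol{1}\}=\{\ol{\alpha_1},\ldots,\ol{\alpha_s}\}$, which gives $\ol{\beta}=\ol{\alpha_j\alpha_i^{-1}}$ for some~$j$. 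The reverse inclusion is immediate by running the same implications backwards. For the cardinality, distinctness of the $s$ elements $\ol{\alpha_j\alpha_i^{-1}}$ follows from distinctness of the $\ol{\alpha_j}$, while $\ol{\alpha_j\alpha_i^{-1}}=\ol{\alpha_i^{-1}}$ would force $\ol{\alpha_j}=\ol{1}$, which is excluded.

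The main bookkeeping hazard is to keep the class $\ol{1}$ separated from the generic $\ol{\alpha_i}$'s throughout, since the dimension-maximality step requires the relevant scalar to lie outside $\Stab(U)=\F_q^*$. Once that is handled, the whole argument reduces to short manipulations of subspace inclusions and cosets modulo~$\F_q^*$; no counting beyond~(i) and~(ii) is needed.
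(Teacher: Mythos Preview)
Your proposal is correct and follows essentially the same approach as the paper's proof: both arguments hinge on the identity $\alpha_i^{-1}(U\cap\alpha_iU)=U\cap\alpha_i^{-1}U$, the maximality of the intersection dimension to upgrade inclusions to equalities, and the triviality of $\Stab(V)$ to ensure distinctness. The only cosmetic difference is that the paper fixes an explicit basis $\{v_0,w_0\}$ of~$V$ and tracks the preimages $v_i,w_i\in U$ with $v_0=\alpha_iv_i$, $w_0=\alpha_iw_i$ through the computations, whereas you work directly with subspace inclusions; in particular, your observation in~(b) that $\ol{\beta}\in A_{\alpha_i^{-1}V}\Longleftrightarrow \ol{\alpha_i\beta}\in A_V$ makes the set identity $A_{\alpha_i^{-1}V}=\alpha_i^{-1}A_V$ (projectively) transparent without choosing coordinates.
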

	
	\begin{proof}
	(a)	First we show that each $\alpha_i^{-1}V$ arises as an intersection $U\cap \beta U$.
		By assumption, $V=U\cap \alpha U = \langle v_0, w_0\rangle$ for some $v_0,w_0\in V$.
		Then for each $i \in \{1,\ldots,s\}$ there exist $v_i,w_i\in U$ such that
			\[v_0=\alpha_i v_i \quad \text{and}\quad w_0=\alpha_i w_i. \]
		Define $V_i=\alpha_i^{-1}V=\langle v_i, w_i \rangle$.
		Then
		\begin{equation}\label{eq:Vintersection}
		     U \cap \alpha_i^{-1} U= \alpha_i^{-1}(U\cap \alpha_i U) = \alpha_i^{-1}V=V_i,
		\end{equation}
	       so each of the shifts~$V_i$ arises as an intersection.
		
		Next we show that $V,V_1,\ldots,V_s$ are distinct. It follows immediately that $V\neq V_i$ for any~$i$ because
		$\Stab(V)=\F_q^*$ and $\ol{\alpha_i}\neq\ol{1}$.
		Suppose now $V_i=V_j$, thus $\langle v_i,w_i\rangle=\langle v_j,w_j\rangle$.
		From
		\begin{equation}\label{eq:vwi}
			v_i=\frac{\alpha_j}{\alpha_i}v_j \quad \text{and} \quad w_i=\frac{\alpha_j}{\alpha_i}w_j
	         \end{equation}
		we conclude that $\frac{\alpha_j}{\alpha_i} \in \Stab(V_i)=\Stab(V)=\F_q^*$, and since
		$\ol{\alpha_i}\neq \ol{\alpha_j}$ if $i\neq j$, we arrive at $i=j$.
		
		It remains to show that the shifts~$V_i$ are the only cyclic shifts of~$V$ that arise as intersections.
		Suppose $W=\gamma V=U \cap \beta U$ for some $\gamma,\beta \in \F_{q^n}^*$.
		Then $V=\gamma^{-1}W\subseteq\gamma^{-1}U$ and thus $V\subseteq U\cap\gamma^{-1}U$ by choice of~$V$.
		Thus $\ol{\gamma^{-1}}\in A_V$, as desired. 	
		\\		
		(b) Recall that $A_{V_i}=\{\ol{\beta} \in \bP(\F_{q^n}) \mid U\cap \beta U = V_i \}\cup\{\ol{1}\}$.
		We want to show that
		\begin{equation}\label{eq:AVi}
		      A_{V_i} =\left\{\ol{\frac{1}{\alpha_i}},\ol{\frac{\alpha_1}{\alpha_i}},\ldots,\ol{\frac{\alpha_s}{\alpha_i}} \right\}.
		 \end{equation}
		For ``$\supseteq$'' note that trivially $\ol{1}\in A_{V_i}$ and $\ol{\alpha_i^{-1}} \in A_{V_i}$
		thanks to~\eqref{eq:Vintersection}.
		Consider now $\ol{\alpha_j\alpha_i^{-1}}$ for $j\neq i$.
		Then $\ol{\alpha_j}\neq \ol{\alpha_i}$ and thus $\frac{\alpha_j}{\alpha_i} \not \in \F_q^*=\Stab(U)$.
		Moreover,~\eqref{eq:vwi} yields
		$V_i\subseteq U \cap \frac{\alpha_j}{\alpha_i}U$.
		Since by assumption the dimension of the intersection cannot be bigger than~$2$, we conclude
		$V_i=U\cap \frac{\alpha_j}{\alpha_i} U $.
                 \\
                	For ``$\subseteq$'' suppose $\ol{\beta} \in A_{V_i}\sm\{ \ol{\alpha_{i}^{-1}},\ol{1} \}$.
		Then there exist $u_1,u_2 \in U$ such that $v_i=\beta u_1$ and $w_i=\beta u_2$.
		Thus $\alpha_i\beta u_1=v_0$, $\alpha_i\beta u_2=w_0$ and $\ol{\alpha_i\beta} \neq \ol{1}$.
		All of this shows that $U \cap \alpha_i\beta U =V$.
		Hence $\ol{\alpha_i\beta} \in A_{V}$ and so $\ol{\alpha_i \beta} = \ol{\alpha_j}$ for some $j\in \{1,\ldots,s\}$.
		Thus $\ol{\beta} =\ol{\alpha_j\alpha_i^{-1}}$. This establishes~\eqref{eq:AVi}.
                 Finally, it is easy to see that the listed elements in~\eqref{eq:AVi} are distinct and thus $|A_{V_i}|=s+1$.		
		\end{proof}
	
	Our next result says that $|A_V|=q+1$ in the situation of \cref{prop:MaxIntFullLenOrbitStructure}.
	
	\begin{proposition}\label{prop:2k4IntRepCount}
		Let $U \in \cG_q(k,n)$ generate a full-length orbit with $\ds(\Orb(U))=2k-4$.
		Suppose there exists $V\in \cG_q(2,n)$ such that $V=U\cap \alpha U$ for some $\alpha \in \F_{q^n}^*$.
		Let $\ol{\beta} \in A_V \sm \{\ol{1}\}$.
		Then
		\begin{alphalist}
			\item $\ol{\alpha}=\ol{\beta}$ or $1=\lambda \alpha^{-1} + \mu \beta^{-1}$ for some $\lambda,\mu \in \F_q^*$,
			\item $|A_V|=q+1$.
		\end{alphalist}
	\end{proposition}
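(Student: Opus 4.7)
The plan is to prove part (a) by contradiction and then derive part (b) as a direct consequence. For (a), I may assume $\overline{\alpha}\neq\overline{\beta}$, since otherwise the first alternative is immediate. A brief check shows that the existence of $\lambda,\mu\in\F_q^*$ with $1=\lambda\alpha^{-1}+\mu\beta^{-1}$ is equivalent to the $\F_q$-linear dependence of $1,\alpha^{-1},\beta^{-1}$: the pair $\alpha^{-1},\beta^{-1}$ is already $\F_q$-independent because $\overline{\alpha}\neq\overline{\beta}$, and any dependence using only two of the three vectors would force $\alpha$ or $\beta$ into $\F_q^*$, contradicting $\overline{\alpha},\overline{\beta}\neq\overline{1}$. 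So for the contradiction I assume $T:=\langle 1,\alpha^{-1},\beta^{-1}\rangle_{\F_q}$ has dimension $3$.

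The core construction is then short. Pick a basis $\{v,w\}$ of $V$ and set $\zeta:=w/v$, which lies in $\F_{q^n}\setminus\F_q$ since $v,w$ are $\F_q$-independent. From $V\subseteq U\cap\alpha U\cap\beta U$ all six vectors $v,w,\alpha^{-1}v,\alpha^{-1}w,\beta^{-1}v,\beta^{-1}w$ lie in $U$, so the $\F_q$-subspaces $Tv$ and $Tw$ are both contained in $U$, and both are $3$-dimensional because multiplication by a nonzero field element is $\F_q$-linearly injective on $T$. The key identity is $Tw=T(\zeta v)=\zeta(Tv)$, which gives $\zeta(Tv)\subseteq U$, i.e., $Tv\subseteq\zeta^{-1}U$. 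Combining with $Tv\subseteq U$ produces $\dim(U\cap\zeta^{-1}U)\geq\dim Tv=3$; since $\zeta^{-1}\notin\F_q^*=\Stab(U)$, the hypothesis $\ds(\Orb(U))=2k-4$ (i.e., maximum intersection dimension $\ell=2$) forces $\dim(U\cap\zeta^{-1}U)\leq 2$, a contradiction, which establishes (a).

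Part (b) then follows quickly. \cref{prop:A_VGroupAction}(a) shows $\overline{\Orb_{\varphi}(\alpha)}\subseteq A_V$, and since this projectivized $\varphi$-orbit has $q$ elements disjoint from $\{\overline{1}\}$, we get $|A_V|\geq q+1$. For the reverse inequality, by (a) each $\overline{\beta}\in A_V\setminus\{\overline{1},\overline{\alpha}\}$ satisfies $1=\lambda\alpha^{-1}+\mu\beta^{-1}$ for some $\lambda,\mu\in\F_q^*$; solving projectively gives $\overline{\beta}=\overline{\alpha/(\alpha-\lambda)}$, which is uniquely determined by $\lambda\in\F_q^*$, and a short check shows the $q-1$ values of $\lambda$ produce $q-1$ pairwise distinct elements, none equal to $\overline{1}$ or $\overline{\alpha}$ (any coincidence would force $\alpha\in\F_q$). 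Hence $|A_V|\leq q+1$, giving equality. The main obstacle is spotting the identity $Tw=\zeta(Tv)$, which is what turns a mere linear-algebra dependence failure into a genuine $3$-dimensional intersection of cyclic shifts of $U$.
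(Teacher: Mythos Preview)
Your proof is correct and follows essentially the same route as the paper. The paper writes the key space explicitly as $\tilde{U}=\langle v_1,\alpha^{-1}v_1,\beta^{-1}v_1\rangle$ and observes that $\frac{v_2}{v_1}\tilde{U}=\langle v_2,\alpha^{-1}v_2,\beta^{-1}v_2\rangle\subseteq U$, forcing $\dim\tilde{U}\le 2$ and then case-analyzing the resulting dependence; your packaging via $T=\langle 1,\alpha^{-1},\beta^{-1}\rangle$ and the identity $Tw=\zeta(Tv)$ is the same argument with your $\zeta=w/v$ playing the role of the paper's $v_2/v_1$. For part~(b) you invoke \cref{prop:A_VGroupAction}(a) to get the lower bound $|A_V|\ge q+1$, whereas the paper simply cites \cref{prop:A_VGroupAction}(b), but this is the same content; your upper-bound count via $\overline{\beta}=\overline{\alpha/(\alpha-\lambda)}$ with $\lambda\in\F_q^*$ matches the paper's exactly, and your extra distinctness check, while unnecessary for the inequality, is correct.
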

	
	\begin{proof}
		(a) Let $V=\langle v_1,v_2 \rangle = U \cap \alpha U$ and let $\ol{\beta} \in A_V\sm\{\ol{1}\}$.
		Then there exist $u_1,u_2,w_1,w_2 \in U$ such that
		\[
                    v_i=\alpha u_i = \beta w_i \text{ for }i=1,2.
        \]
		Note that $v_1,v_2 \in U$ as well.
		Define $\tilde{U}=\langle v_1, u_1, w_1 \rangle$. Then $\tilde{U} \subseteq U$ and
		\[
		     \frac{v_2}{v_1} \tilde{U}=\langle v_2, u_2, w_2 \rangle \subseteq U \cap \frac{v_2}{v_1} U.
		 \]
		Since $v_1,v_2$ are $\F_q$-linearly independent, the element $\frac{v_2}{v_1}$ is not in $\F_q^* = \Stab(U)$, and thus
		$U \cap \frac{v_2}{v_1} U\neq U$.
		Therefore ${\dim(U \cap \frac{v_2}{v_1} U) \leq 2}$ and so $\{v_1,u_1,w_1 \}$ must be $\F_q$-linearly dependent.
		
		Now we may argue as follows.
		First of all, the sets $\{v_1,u_1\}$ and $\{v_1,w_1\}$ are both $\F_q$-linearly independent because
		$\alpha, \beta \not \in \F_q$.
		Next, if $u_1=\lambda w_1$ for some $\lambda \in \F_q$ then
		$\beta w_1=\alpha u_1 =\alpha \lambda w_1$, and hence
		$\ol{\beta}=\ol{\alpha}$.
		It remains to consider the case $v_1=\lambda u_1+\mu w_1$ for some $\lambda,\mu \in \F_q^*$.
		But this implies immediately $1=\lambda \alpha^{-1} + \mu\beta^{-1}$, as desired.
		
		(b) Let $\ol{\beta}\in A_V\setminus\{\ol{1},\ol{\alpha}\}$. Part~(a) tells us that
		 $\mu\beta^{-1} = 1-\lambda\alpha^{-1}$ for some $\lambda,\mu\in\F_q^*$.
		 The $q-1$ choices for~$\lambda$ imply that there are at most $q-1$ options for such $\ol{\beta}$.
		 Thus $|A_V|\leq q+1$.
         The reverse inequality has been established in \cref{prop:A_VGroupAction}(b).
	\end{proof}

	We now have all of the pieces in place to prove \cref{thm:Dist2k4L2Structure}.
	
	\begin{proof}[Proof of {\cref{thm:Dist2k4L2Structure}}]
		By applying \eqref{eq:t1sums} and \cref{def:DistDistr}  with $t=1$ and $\ell=2$ we notice that it suffices to compute $\lambda_2$ for such a subspace $U$.
        Hence we need to determine $|\{\ol{\beta}\mid \dim(U\cap\beta U)=2\}|$.
        We distinguish two cases.	
       \\
       \underline{Case 1:}  Suppose $U$ contains a cyclic shift of $\F_{q^2}$.
       Then $U=\gamma\F_{q^2}\oplus U'$ for some $U' \in \cG_q(k-2,n)$ and $\gamma \in \F_{q^n}$.
       Then $|A_{\gamma\F_{q^2}}\sm\{\ol{1}\}|=q$ by \cref{prop:2k4IntReprNonTrivStab}, and this is the number of $\ol{\beta}\in\bP(\F_{q^n})$ such that $U\cap\beta U=\gamma\F_{q^2}$.
        Moreover, thanks to \cref{prop:NonTrivStab2dimIntCount2k4} there does not exist any $V \in \cG_q(2,n)$ such that $V= \gamma'\F_{q^2}=U \cap \alpha U$ for some
        $\alpha,\gamma' \in \F_{q^n}$ except for $V=\gamma\F_{q^2}$.
        In other words, any other $2$-dimensional intersection $V=U\cap \alpha U$ has full-length orbit.
		\cref{prop:2k4IntRepCount} shows that for each such~$V$ we have $|A_V\sm\{\ol{1}\}|=q$, which is the number of $\ol{\beta}\in\bP(\F_{q^n})$ leading to the $2$-dimensional intersection~$V$.
       Furthermore, \cref{prop:MaxIntFullLenOrbitStructure} says that the collection of all $2$-dimensional intersections with full-length orbit can be partitioned into sets of the form
       $\{V,\alpha_1^{-1}V,\ldots,\alpha_q^{-1}V\}$, each one with cardinality $q+1$. Suppose we have~$r$ such sets. Note that $r=0$ is possible.
       Then
		\[
                     \sum_{\substack{V=U\cap\alpha U \\ \substack{\dim(V)=2 \\ V\neq \gamma\F_{q^2}} }}|A_V\sm\{\ol{1}\}|=rq(q+1).
         \]
		Combining all of this, we arrive at $\lambda_2=q+rq(q+1)$, as desired.
		\\
		\underline{Case 2:}  Suppose $U$ does not contain a cyclic shift of $\F_{q^2}$.
       Then any $V\in \cG_q(2,n)$ with $V=U\cap\alpha U$ for some $\alpha$ has full-length orbit and since $\ds(\Orb(U))=2k-4$ there exists at least one such subspace.
		So the previous argument shows that
		\[
              \sum_{\substack{V=U\cap\alpha U \\\dim(V)=2 }}|A_V\sm\{\ol{1}\}|=rq(q+1)\text{ for some }r\geq1
         \]
		and $\lambda_2=rq(q+1)$, as stated.
	\end{proof}

We conclude this section with some examples illustrating various intersection distributions for full-length orbits with distance $2k-4$.
     We used SageMath to compute the values of $\lambda_2$ and $r$ that occurred for some different values of the parameters $q,n$, and $k$.
     Recall from \cref{thm:Dist2k4L2Structure} that~$\lambda_2$ fully determines the intersection distribution.
     For each triple $(q,n,k)$, we generated random subspaces in $\cG_q(k,n)$ containing the element~$1$ and analyzed those that
     generated a full-length orbit with distance $2k-4$.
     In the following table we list all occurring values for~$\lambda_2$ along with their frequency~$N$, ordered accordingly.
     For example, when $(q,n,k)=(2,10,4)$ we found 248 subspaces with $\lambda_2=2$ and $2598$ subspaces with
     $\lambda_2=6$ etc.
     In the same way we list the corresponding value of~$r$.
     Recall from \cref{cor:2k-4DistIneq} that the maximum possible value for $\lambda_2$  is $Q/(q+1)$.

     \begin{table}[h!]
     	\centering
       \footnotesize{
     	\begin{tabular}{c|c|c|c|c|c|c}
     		$q$ \!\!&\!\! $n$ \!\!&\!\! $k$ \!\!&\!\!$\lambda_2$ \!\!&\!\! $r$ \!\!&\!\! $N$ \!\!&\!\! $\frac{Q}{q+1}$\\
     		\hline $2$ \!\!&\!\! $10$ \!\!&\!\! $4$ \!\!&\!\! $2,6,8,12,14,$ \!\!&\!\! $0,1,1,2,2,$ \!\!&\!\! $248,2598,34,2059,90,$ \!\!&\!\! $70$ \\
     		\!\!&\!\! \!\!&\!\! \!\!&\!\! $18,20,24,30$ \!\!&\!\! $3,3,4,5$ \!\!&\!\! $298,94,195,49$ \!\!&\!\! \\
     		\hline $2$ \!\!&\!\! $11$ \!\!&\!\! $4$ \!\!&\!\! $6,12,18,24,30,42$ \!\!&\!\! $1,2,3,4,5,7$ \!\!&\!\! $1760,1251,63,57,12,24 $\!\!&\!\! $70$  \\
     		\hline $2$ \!\!&\!\! $11$ \!\!&\!\! $5$ \!\!&\!\! $6,12,18,24,30,36,42,48,$ \!\!&\!\! $1,2,3,4,5,6,7,8,$ \!\!&\!\! $3,7,22,67,243,494,982,1228,$ \!\!&\!\! $310$ \\
     		\!\!&\!\! \!\!&\!\! \!\!&\!\! $54,60,66,72,78,84,90,$ \!\!&\!\! $9,10,11,12,13,14,15,$ \!\!&\!\! $1483,1285,1143,783,519,258,153,$ \!\!&\!\!\\
     		\!\!&\!\! \!\!&\!\! \!\!&\!\! $96,102,108,114,120,126$ \!\!&\!\! $16,17,18,19,20,21$ \!\!&\!\! $90,39,29,11,2,1 $ \!\!&\!\!\\
     		\hline $2$ \!\!&\!\! $12$ \!\!&\!\! $4$ \!\!&\!\! $2,6,8,12,14,18,$ \!\!&\!\! $0,1,1,2,2,3,$ \!\!&\!\! $150,953,4,664,6,13,$ \!\!&\!\! $70$\\
     		\!\!&\!\! \!\!&\!\! \!\!&\!\! $20,24,30,42$ \!\!&\!\! $3,4,5,6$ \!\!&\!\! $9,13,2,4$ \!\!&\!\! \\
     		\hline $2$ \!\!&\!\! $13$ \!\!&\!\! $4$ \!\!&\!\! $6,12,18,24,30,42$ \!\!&\!\! $1,2,3,4,5,7$ \!\!&\!\! $ 1486, 967, 7, 8, 3,18 $ \!\!&\!\! $70$\\
     		\hline $2$ \!\!&\!\! $13$ \!\!&\!\! $5$ \!\!&\!\! $6,12,18,24,36,42,$ \!\!&\!\! $1,2,3,4,5,6,7,$ \!\!&\!\! $ 1136,2933,1535,1485,528,$ \!\!&\!\! $310$\\
     		\!\!&\!\! \!\!&\!\! \!\!&\!\! $48,54,60,66,72$ \!\!&\!\! $8,9,10,11,12$ \!\!&\!\! $437,148,97,36,26,6,6 $ \!\!&\!\!\\
     		\hline $3$ \!\!&\!\! $9$ \!\!&\!\! $4$ \!\!&\!\! $12,24,36,48,60,72,84$ \!\!&\!\! $1,2,3,4,5,6,7$ \!\!&\!\! $2900, 3537, 283, 354, 290, 160, 55$ \!\!&\!\! $390$\\
     		\hline $3$ \!\!&\!\! $11$ \!\!&\!\! $4$ \!\!&\!\! $12,24,36,48,60,72,84$ \!\!&\!\! $1,2,3,4,5,6,7 $ \!\!&\!\! $1048,1091,4,8,12,6,2 $ \!\!&\!\! $390$\\
     		\hline $3$ \!\!&\!\! $12$ \!\!&\!\! $4$ \!\!&\!\! $3,12,24,27,36,48$ \!\!&\!\! $0,1,2,2,3,4$ \!\!&\!\! $ 21, 288, 397,1,2,4 $ \!\!&\!\! $390$\\
     	\end{tabular}
        }
     	\caption{Example values of $\lambda_2,r$ for random search of full-length orbits with distance $2k-4$}
     	\label{tab:rValues}
     \end{table}

	Besides the random searches that we present above, we also performed, for various choices of parameters, exhaustive searches among all subspaces in $\cG_q(k,n)$ that contain $1$.
	We mostly restricted ourselves to $k=3$ because of computational feasibility.
	The results of these exhaustive searches are presented in \cref{tab:rValuesExhaustive} on the next page.
	Again, the values of $N$ in the table are the frequencies of the values of $\lambda_2$ (and $r$).
	Notice in particular that the values of $\lambda_2$ we found by random search in \cref{tab:rValues} for $q=2,n=10,k=4$ do indeed exhaust all possible values of $\lambda_2$ for these parameters.
	
	Each value of $N$ appearing in \cref{tab:rValuesExhaustive} is a multiple of $(q^k-1)/(q-1)$; this is due to the fact that for any subspace $U$ containing $1$, the shifts $\alpha^{-1}U$ for
	$\alpha \in U\sm 0$ also contain~$1$ and generate the same orbit as $U$.  Furthermore, these are all of the elements of $\Orb(U)$ that contain~$1$.
	This means that our exhaustive search counts every cyclic orbit code $(q^k-1)/(q-1)$ times.
	Note that \cref{tab:rValuesExhaustive} shows that the upper bound for $\lambda_2$ in \cref{cor:2k-4DistIneq} is quite poor in general.
	
	\begin{table}[h!]
		\centering
		\footnotesize{
		\begin{tabular}{c|c|c|c|c|c|c}
			$q$ \!\!&\!\! $n$ \!\!&\!\! $k$ \!\!&\!\! $\lambda_2$ \!\!&\!\! $r$ \!\!&\!\! $N$ \!\!&\!\! $\frac{Q}{q+1}$ \\
			\hline $2$ \!\!&\!\! $6$ \!\!&\!\! $3$ \!\!&\!\! $2,6$ \!\!&\!\! $0,1$ \!\!&\!\! $35, 63$ \!\!&\!\! $14$ \\
			\hline $2$ \!\!&\!\! $7$ \!\!&\!\! $3$ \!\!&\!\! $6$ \!\!&\!\! $1$ \!\!&\!\! $147$ \!\!&\!\! $14$ \\
			\hline $2$ \!\!&\!\! $8$ \!\!&\!\! $3$ \!\!&\!\! $2,6,14$ \!\!&\!\! $0,1,2$ \!\!&\!\! $140,280,7 $ \!\!&\!\! $14$ \\
			\hline $2$ \!\!&\!\! $8$ \!\!&\!\! $4$ \!\!&\!\! $12,14,18,20$ \!\!&\!\! $2,2,3,4,$ \!\!&\!\! $1080,1200,3000,1200,$ \!\!&\!\! $70$ \\	
			\!\!&\!\! \!\!&\!\! \!\!&\!\! $24,30,38$ \!\!&\!\! $4,5,6$ \!\!&\!\! $2760, 1200, 750$ \!\!&\!\! \\	
			\hline $2$ \!\!&\!\! $9$ \!\!&\!\! $3$ \!\!&\!\! $6$ \!\!&\!\! $1$ \!\!&\!\! $588$ \!\!&\!\! $14$ \\
			\hline $2$ \!\!&\!\! $9$ \!\!&\!\! $4$ \!\!&\!\! $6,12,18,24,30$ \!\!&\!\! $1,2,3,4,5$ \!\!&\!\! $31995, 33120, 11340, 7560, 2025$ \!\!&\!\! $70$ \\
			\hline $2$ \!\!&\!\! $10$ \!\!&\!\! $3$ \!\!&\!\! $2,6$ \!\!&\!\! $0,1$ \!\!&\!\! $595,1190 $ \!\!&\!\! $14$ \\
			\hline $2$ \!\!&\!\! $10$ \!\!&\!\! $4$ \!\!&\!\! $2,6,8,12,14,$ \!\!&\!\! $0,1,1,2,2,$ \!\!&\!\! $35700,213375,2550,164235,7650,$ \!\!&\!\! $70$ \\	
			\!\!&\!\! \!\!&\!\! \!\!&\!\! $18,20,24,30$ \!\!&\!\! $3,3,4,5$ \!\!&\!\! $22725, 7650, 14325,3750$ \!\!&\!\! \\	
			\hline $3$ \!\!&\!\! $6$ \!\!&\!\! $3$ \!\!&\!\! $3,12$ \!\!&\!\! $0,1$ \!\!&\!\! $130,377$ \!\!&\!\! $39$ \\	
			\hline $3$ \!\!&\!\! $7$ \!\!&\!\! $3$ \!\!&\!\! $12$ \!\!&\!\! $1$ \!\!&\!\! $1183$ \!\!&\!\! $39$ \\	
			\hline $3$ \!\!&\!\! $8$ \!\!&\!\! $3$ \!\!&\!\! $3,12,39$ \!\!&\!\! $0,1,3$ \!\!&\!\! $1170,3510,13 $ \!\!&\!\! $39$ \\	
			\hline $3$ \!\!&\!\! $9$ \!\!&\!\! $3$ \!\!&\!\! $12$ \!\!&\!\! $1$ \!\!&\!\! $10647$ \!\!&\!\! $39$ \\
			\hline $5$ \!\!&\!\! $6$ \!\!&\!\! $3$ \!\!&\!\! $5,30$ \!\!&\!\! $0,1$ \!\!&\!\! $806,3999$ \!\!&\!\! $155$ \\	
		\end{tabular}
		}
	\caption{Values of $\lambda_2,r$ for exhaustive search of full-length orbits with distance $2k-4$}
	\label{tab:rValuesExhaustive}
	\end{table}
	
	Finally we present an example concerning the value $f(U)$.
	From \cref{theo:SidonCount} and \cref{cor:2k-4DistIneq} we know that for full-length orbits $\Orb(U)$
	with distance $2k-2$ or $2k-4$ the intersection distribution is completely determined by $q,n,k,$ and $f(U)$; see
	also~\eqref{eq:flambda}.
	In fact, this also holds when the distance is $2k$ because in that case the intersection distribution is trivial.
	It is thus natural to ask whether $q,n,k,f(U)$ along with the distance also determine the intersection distribution of
	 full-length orbits if the distance is at most $2k-6$.
	
	However, this does not hold. Furthermore, $q,n,k,$ and $f(U)$ do not determine the distance of the orbit code.
	
	\begin{example}\label{ex:fDoesNotDetermineDistrib}
		Let $q=2, n=11, k=5$ and suppose $\omega$ is a primitive element of $\F_{2^{11}}$ over $\F_2$ satisfying $\omega^{11}=\omega^2+1$. Define
			\[
				U\!=\! \langle 1, \omega^{417},\omega^{1823},\omega^{1983},\omega^{64}\rangle,\
				V\!=\! \langle 1, \omega^{1332},\omega^{468},\omega^{749},\omega^{1627}\rangle, \
				W\!=\!\langle 1, \omega^{1618}, \omega^{942}, \omega^{1041}, \omega^{1315}\rangle.
			\]
		Then all three subspaces generate full-length orbits. A computation using SageMath shows that
		$\ds(\Orb(U))=\ds(\Orb(V))=4=2k-6$, whereas $\ds(\Orb(W))=6=2k-4$.
		Furthermore,  $f(U)=f(V)=f(W)=703$, and hence $f$ does not determine the distance.
		Finally $\Orb(U)$ has intersection distribution $(\lambda_0,\lambda_1,\lambda_2,\lambda_3)=(1343,624,60,18)$
		while $\Orb(V)$ has $(\lambda_0,\lambda_1,\lambda_2,\lambda_3)=(1343,600,96,6)$.
		Thus, the intersection distribution is not determined by $q,n,k,f$ and the distance.
	\end{example}
	
	\section{Intersection Distributions of Unions of Full-Length Orbits} \label{sec:UnionOfOrbits}
	
	In this section we generalize  the ideas of Sections~\ref{sec:Preliminaries} and \ref{sec:fractions} to codes that arise as union of orbits generated by subspaces of the same dimension.
	We need to start by adapting the definitions from the single orbit case to multiple orbits.
	Analogously to \cref{def:DistDistr} we define the distance and intersection distributions with respect fixed reference spaces for each orbit.
	In order to relate these two distributions we need to restrict ourselves to subspaces with the same stabilizer.
	
	\begin{definition}\label{def:DistDistrMultOrbits}
		Let $k\leq n/2$ and $U_j \in\Gkn$ with $\Stab(U_j)=\F_{q^t}$ for $j=1,\ldots,m$ and define $\cC=\bigcup_{j=1}^m\Orb(U_j)$. Suppose $\ds(\cC)=2d$.
		For $i=d,\ldots,k$ define $\delta_{2i}=|\{(U_j, \alpha U_{j'}) \mid 1\leq j \leq j' \leq m,\; \d(U_j,\alpha U_{j'})=2i\}|$.
		We call $(\delta_{2d},\ldots,\delta_{2k})$ the \emph{distance distribution} of $\cC$.
		Furthermore, we set $\ell=k-d$, thus $\ell$ is the maximum dimension of the intersection spaces $U_j\cap\alpha U_{j'}$.
		For $i=0,\ldots,\ell$ and $1\leq j \leq j' \leq m$ we define
		$
		\cL_i(U_j,U_{j'})=\{\ol{\alpha}\in\bP(\F_{q^n})\mid \dim(U_j\cap\alpha U_{j'})=i\}
		$
		and set $\lambda_i$ as
		\[
		\lambda_i(\cC) =  \sum_{j \leq j'} |\cL_{i}(U_j,U_{j'})|.
		\]
		We call
		$(\lambda_0,\ldots,\lambda_\ell)$ the \emph{intersection distribution} of $\cC$.
		As in \cref{def:DistDistr} we have $\lambda_i=(q^t-1)/(q-1)\delta_{2(k-i)}$ for $i=0,\ldots,\ell$.
	\end{definition}
	
    We now extend \cref{def:UAssoc} to the case of multiple generating subspaces.
	It will suffice to extend the definitions to pairs $(U,V)$ of subspaces and we will do so for $\cL,\,\cM$, and $\cF$.
	There is no meaningful generalization of $\cS(U)$ to two spaces, and in fact no such space will be needed.
	
	\begin{definition}\label{def:MultOrbUAssoc}
		Let $k\leq n/2$ and $U,V \in\Gkn$. Define $\ell=\max\{\dim(U\cap \alpha V) \mid \alpha \in \F_{q^n}^* \}$.
		Note that the cyclic code $\Orb(U)\cup \Orb(V)$ has minimum distance at most $2k-2\ell$.
		We define
		\begin{arabiclist}
			\item $\cL(U,V)=\bigcup_{i=1}^\ell\cL_i(U,V)$, where $\cL_i(U,V)$ is as in \cref{def:DistDistrMultOrbits}.
			\item $\cM(U,V)=\{(\ol{u},\ol{v})\mid u\in U\setminus 0, v \in V \setminus0\}$.
			\item $\cF(U,V)=\{ \ol{uv^{-1}} \mid u \in U \sm 0, v\in V\sm 0 \}$ and $f_{U,V}:=|\cF(U,V)|$.
		\end{arabiclist}
	\end{definition}
	
	Notice that when $U=V$, each of these definitions reduces to the corresponding part in \cref{def:UAssoc}.
	As in the single subspace case, we omit $i=0$ from the definition of $\cL(U,V)$ since $\lambda_0$ can be calculated from $\lambda_i,\,i=1,\ldots,\ell$.
	We will carry this out in the proof of \cref{thm:DistDistribUnionTwoSpaceSidon}.
	
	Again, the cardinality of $\cM(U,V)$ depends only on $q,n,$ and $k$ and we denote this by
	\begin{equation}\label{eq:QMultOrbits}
	\widehat{Q}:= |\cM(U,V)| = \left(\frac{q^k-1}{q-1} \right)^2.
	\end{equation}
	
	For any two subspaces $U,V \in \Gkn$ generating different orbit codes we have again a map $\psi: \cM(U,V) \to \bP(\F_{q^n})$ given by $\psi(\ol{u},\ol{v})=\ol{uv^{-1}}$.
	As in \cref{sec:fractions}, $\psi$ surjects onto $\cL(U,V)$.
	
	\begin{proposition}\label{prop:psiPreimageSizeMultOrbits}
		Let $U,V\in\Gkn$ such that $\Orb(U) \neq \Orb(V)$ and set $\ell=\max\{\dim(U\cap \alpha V) \mid \alpha \in \F_{q^n}^* \}$.
		The map
		$\psi: \cM(U,V) \to\cF(U,V),\quad (\ol{u},\ol{v})\longmapsto \ol{uv^{-1}}$ is well-defined.
		It satisfies $\cF(U,V)=\psi(\cM(U,V))=\cL(U,V)$.
		Furthermore,  for any $\ol{\alpha}\in\cF(U,V)$ we have
		\[
		\ol{\alpha}\in\cL_i(U,V)\Longleftrightarrow |\psi^{-1}(\ol{\alpha})|=(q^i-1)/(q-1).
		\]
	\end{proposition}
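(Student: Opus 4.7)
The plan is to adapt the proof of \cref{prop:psiPreimageSize} to this two-subspace setting, with the simplification that the hypothesis $\Orb(U)\neq\Orb(V)$ absorbs the role played previously by the stabilizer. Well-definedness of $\psi$ is routine: if $(\ol{u},\ol{v})=(\ol{u'},\ol{v'})$ then $u=\lambda u'$ and $v=\mu v'$ for some $\lambda,\mu\in\F_q^*$, so $uv^{-1}\sim u'v'^{-1}$. The identity $\cF(U,V)=\psi(\cM(U,V))$ is then just the defining description of $\cF(U,V)$.

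Next I would establish $\psi(\cM(U,V))=\cL(U,V)$. For the forward inclusion, pick $\ol{\alpha}=\ol{uv^{-1}}$ with $u\in U\setminus 0$, $v\in V\setminus 0$; then $u=\lambda\alpha v$ for some $\lambda\in\F_q^*$, so $\lambda^{-1}u=\alpha v$ is a nonzero element of $U\cap\alpha V$. Hence $\dim(U\cap\alpha V)\geq 1$, and because $\Orb(U)\neq\Orb(V)$ we cannot have $U=\alpha V$, so in fact $1\leq\dim(U\cap\alpha V)\leq\ell$, i.e., $\ol{\alpha}\in\cL(U,V)$. This is exactly where the assumption $\Orb(U)\neq\Orb(V)$ enters, replacing the stabilizer case from \cref{prop:psiPreimageSize}. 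Conversely, given $\ol{\alpha}\in\cL(U,V)$, any nonzero $w\in U\cap\alpha V$ has the form $w=\alpha v$ with $v\in V\setminus 0$ and $w\in U\setminus 0$, so $\ol{\alpha}=\ol{wv^{-1}}\in\psi(\cM(U,V))$.

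For the fiber-size claim, fix $\ol{\alpha}\in\cL_i(U,V)$. If $(\ol{u},\ol{v})\in\psi^{-1}(\ol{\alpha})$ then $u=\lambda\alpha v$ for some $\lambda\in\F_q^*$, so $\ol{u}=\ol{\alpha v}$ is determined by $\ol{v}$. Hence I only need to count admissible $\ol{v}$. The pair $(\ol{\alpha v},\ol{v})$ belongs to $\cM(U,V)$ precisely when $\alpha v\in U$, equivalently $v\in V\cap\alpha^{-1}U$. Since
\[
\dim(V\cap\alpha^{-1}U)=\dim\bigl(\alpha^{-1}(\alpha V\cap U)\bigr)=\dim(U\cap\alpha V)=i,
\]
this intersection contributes exactly $(q^i-1)/(q-1)$ projective classes $\ol{v}$, giving $|\psi^{-1}(\ol{\alpha})|=(q^i-1)/(q-1)$. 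The converse implication in (b) follows because the values $(q^i-1)/(q-1)$ are distinct for different~$i$, so the fiber size pins down~$i$.

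I do not anticipate a serious obstacle; the only point requiring care is verifying that the hypothesis $\Orb(U)\neq\Orb(V)$ is used correctly to exclude the ``full intersection'' case $U=\alpha V$, which is what allows $\cL(U,V)$ to coincide with $\psi(\cM(U,V))$ without needing an auxiliary stabilizer-type set $\cS$.
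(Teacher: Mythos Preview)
Your proposal is correct and follows essentially the same approach as the paper's proof. The only cosmetic differences are that you count admissible second components~$\ol{v}$ in the fiber (via $v\in V\cap\alpha^{-1}U$) whereas the paper counts first components~$\ol{u}$ (via $u\in U\cap\alpha V$), and you are more explicit than the paper about where the hypothesis $\Orb(U)\neq\Orb(V)$ is invoked to rule out $U=\alpha V$.
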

	
	\begin{proof}
		The well-definedness of $\psi$ is clear and so is $\psi(\cM(U,V))=\cF(U,V)$.
		We show next that $\psi(\cM(U,V))=\cL(U,V)$.
		First, let $\ol{\alpha}=\ol{uv^{-1}}\in \psi(\cM(U,V))$ for some $u\in U\setminus 0, v \in V \setminus0$.
		Then there exists $\lambda\in\F_q^*$ such that $u=\lambda\alpha v$.
		Since $\lambda V=V$ this implies $U\cap \alpha V\neq 0$.
		Hence $\dim(U\cap\alpha V)\in\{1,\ldots,\ell\}$ and thus $\ol{\alpha}\in \cL(U,V)$.
		This shows $\psi(\cM(U,V)) \subseteq \cL(U,V)$.
		The proof of the reverse inclusion proceeds similarly.
		If $\ol{\alpha}\in \cL(U,V)$, then $1\leq\dim(U\cap\alpha V)\leq \ell$.
		Hence there exist $u\in U\sm0, v\in V\sm0$ with $u=\alpha v$.
		So $\ol{\alpha}=\ol{uv^{-1}}$ is in $\psi(\cM(U,V))$.
		
		It remains to show $\ol{\alpha}\in\cL_i(U,V)\Longleftrightarrow |\psi^{-1}(\ol{\alpha})|=(q^i-1)/(q-1)$.
		Fix $\ol{\alpha}\in\cF(U,V)$.
		Note first that for $(\ol{v},\ol{w})\in\psi^{-1}(\ol{\alpha})$, the second component $\ol{w}$ is uniquely determined by the first one.
		Thus it suffices to count the number of possible first components.
		
		Let $\ol{\alpha}\in\cL_i(U,V)$. Hence $\dim(U\cap \alpha V)=i$.
		Then for every $v\in U\cap\alpha V$ there exists $w\in V$ such that $v=\alpha w$.
		Using that there exist $(q^i-1)/(q-1)$ elements $\ol{v}$ such that $v\in U\cap\alpha V$, it follows that
		$|\psi^{-1}(\ol{\alpha})|\geq(q^i-1)/(q-1)$.
		Conversely, suppose that $(\ol{x},\ol{y}) \in \psi^{-1}(\ol{\alpha})$.
		Then $\ol{x}=\ol{\alpha y}$ and $x=\lambda\alpha y$ for some $\lambda\in\F_q$.
		Thus $x\in U\cap\alpha V$.
		This leaves $(q^i-1)/(q-1)$ choices for $\ol{x}$, and thus $|\psi^{-1}(\ol{\alpha})|\leq \frac{q^i-1}{q-1}$.
		Hence $|\psi^{-1}(\ol{\alpha})|= \frac{q^i-1}{q-1}$.
	\end{proof}
	
	The restriction that $\Orb(U)\neq \Orb(V)$ implies that $U\neq\alpha V$ for any~$\alpha$,
	hence the differences between the statements of \cref{prop:psiPreimageSize} and \cref{prop:psiPreimageSizeMultOrbits}.
	As in \cref{sec:fractions}, we can use this result to derive identities relating the sizes $|\cL_i(U,V)|$ for $i=1,\ldots,\ell$.
	
	\begin{corollary}\label{cor:CountMultOrbits}
		Let $U,V \in\Gkn$ such that $\Orb(U) \neq \Orb(V)$. Let $\ell=\max\{\dim(U\cap \alpha V) \mid \alpha \in \F_{q^n}^* \}$.
		Recall the cardinalities $f_{U,V}=|\cF(U,V)|,\,\widehat{Q}=|\cM(U,V)|$, and set $\lambda_i=|\cL_i(U,V)|$ for $i=1,\ldots,\ell$.
		Then
		\begin{equation}\label{eq:fsumMultOrbits}
		f_{U,V}= \sum_{i=1}^\ell \lambda_i.
		\end{equation}
		and
		\begin{equation}\label{eq:QsumMultOrbits}
		\widehat{Q}=\sum_{i=1}^{\ell}\frac{q^i-1}{q-1}\lambda_{i}.
		\end{equation}
	\end{corollary}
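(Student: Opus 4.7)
The plan is to mimic the proof of Corollary \ref{cor:Count}, adapted to the two-subspace setting and stripped of the stabilizer contribution that was present in the single-orbit case. Both identities will follow as purely enumerative consequences of Proposition \ref{prop:psiPreimageSizeMultOrbits}, so no additional structural arguments are required.

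First I would establish \eqref{eq:fsumMultOrbits}. By Proposition \ref{prop:psiPreimageSizeMultOrbits} we have $\cF(U,V) = \cL(U,V) = \bigcup_{i=1}^\ell \cL_i(U,V)$, and this union is disjoint because each $\ol{\alpha}\in\cL(U,V)$ determines a unique intersection dimension $i=\dim(U\cap\alpha V)$. Taking cardinalities yields $f_{U,V}=\sum_{i=1}^\ell \lambda_i$ immediately.

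Next, for \eqref{eq:QsumMultOrbits}, I would partition $\cM(U,V)$ into fibres of $\psi$. Since Proposition \ref{prop:psiPreimageSizeMultOrbits} gives $\psi(\cM(U,V))=\cF(U,V)$, we can write
\[
\cM(U,V)=\bigcup_{\ol{\alpha}\in\cF(U,V)}\psi^{-1}(\ol{\alpha})=\bigcup_{i=1}^\ell\,\bigcup_{\ol{\alpha}\in\cL_i(U,V)}\psi^{-1}(\ol{\alpha}),
\]
where both unions are disjoint. Applying the fibre-size formula $|\psi^{-1}(\ol{\alpha})|=(q^i-1)/(q-1)$ for $\ol{\alpha}\in\cL_i(U,V)$, together with $\widehat{Q}=|\cM(U,V)|$, yields the claim.

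The only point worth flagging is the absence of any term analogous to $\tfrac{q^k-1}{q-1}(s-1)$ from Corollary \ref{cor:Count}. This is precisely because the hypothesis $\Orb(U)\neq\Orb(V)$ forces $U\neq\alpha V$ for every $\alpha\in\F_{q^n}^*$, so there is no analogue of $\cS(U)$ contributing to $\cF(U,V)$ or to the image of $\psi$. Consequently there is no real obstacle: the corollary is essentially a bookkeeping consequence of Proposition \ref{prop:psiPreimageSizeMultOrbits}.
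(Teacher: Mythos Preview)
Your proposal is correct and follows essentially the same route as the paper: both identities are read off directly from Proposition~\ref{prop:psiPreimageSizeMultOrbits}, using $\cF(U,V)=\cL(U,V)$ for~\eqref{eq:fsumMultOrbits} and the fibre decomposition $\cM(U,V)=\bigcup_{i=1}^\ell \psi^{-1}(\cL_i(U,V))$ together with the fibre sizes for~\eqref{eq:QsumMultOrbits}. Your explicit remark on why no stabilizer term appears is a nice addition but not needed for the argument.
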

	
	\begin{proof}
		The identity in \eqref{eq:fsumMultOrbits} follows immediately from $\cL(U,V)=\cF(U,V)$  from \cref{prop:psiPreimageSizeMultOrbits}.
		From the same proposition we have $\psi(\cM(U,V))=\cL(U,V)$, thus
		$\cM(U,V)=\bigcup_{i=1}^\ell\psi^{-1}(\cL_i(U,V))$.
		Now \eqref{eq:QsumMultOrbits} follows from \cref{prop:psiPreimageSizeMultOrbits} and the cardinality of $\cM(U,V)$ in \eqref{eq:QMultOrbits}.
	\end{proof}
	
	In the single orbit case, we saw that orbit codes generated by a Sidon space have full length and maximal possible dimension.
	The Sidon property can be extended to various spaces in such a way that the orbits stay sufficiently far away from each other in the subspace distance.
	This reads as follows.
	
	\begin{definition}\label{def:TwoSpaceSidon}
		Let $U$ and $V$ be distinct $k$-dimensional subspaces of $\F_{q^n}$.
		We say that $U$ and $V$ have the \emph{two-space Sidon property} if any $a,c \in U\sm0$ and $b,d \in V \sm0$ with $ab=cd$ satisfy $\ol{a}=\ol{c}$ and $\ol{b}=\ol{d}$.
	\end{definition}

	\begin{lemma}[{\cite[Lemma 36]{RRT18}}]\label{lem:2spaceSidon}
		Let $U$ and $V$ be distinct subspaces in $\cG_q(k,n)$.
		The following conditions are equivalent.
		\begin{alphalist}
			\item $\dim(U\cap\alpha V) \leq 1$ for all $\alpha \in \F_{q^n}^*$.
			\item $U$ and $V$ have the two-space Sidon property.
		\end{alphalist}
		As a consequence, if~$U,V\in\cG_q(k,n)$ are Sidon spaces and have the two-space Sidon property then the cyclic subspace code $\Orb(U)\cup\Orb(V)$
		has cardinality $2(q^n-1)/(q-1)$ and distance $2k-2$.
	\end{lemma}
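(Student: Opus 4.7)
The plan is to prove the equivalence (a)$\Leftrightarrow$(b) directly, and then deduce the statement about $\Orb(U)\cup\Orb(V)$ from the equivalence together with \cref{thm:Sidon}.

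For the implication (b)$\Rightarrow$(a) I would argue by contrapositive. Suppose $\dim(U\cap\alpha V)\geq 2$ for some $\alpha\in\F_{q^n}^*$. Then I can pick two $\F_q$-linearly independent elements $u_1,u_2\in U\cap\alpha V$ and write $u_i=\alpha v_i$ with $v_i\in V$. Since multiplication by $\alpha$ is a bijection, $v_1,v_2$ are also $\F_q$-linearly independent, in particular $\ol{v_1}\neq\ol{v_2}$. The identity $u_1v_2=\alpha v_1 v_2=u_2 v_1$ then violates the two-space Sidon property with $(a,b,c,d)=(u_1,v_2,u_2,v_1)$, since $\ol{u_1}\neq\ol{u_2}$.

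For (a)$\Rightarrow$(b), suppose $a,c\in U\setminus 0$ and $b,d\in V\setminus 0$ satisfy $ab=cd$. Set $\alpha=ad^{-1}$; then $\alpha=cb^{-1}$ as well, so $a=\alpha d$ and $c=\alpha b$. Hence both $a$ and $c$ lie in $U\cap\alpha V$. By hypothesis this intersection has dimension at most~$1$, so $\ol{a}=\ol{c}$, i.e.\ $c=\lambda a$ for some $\lambda\in\F_q^*$. Substituting into $ab=cd$ gives $b=\lambda d$, so $\ol{b}=\ol{d}$, as required.

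For the consequence, assume $U,V$ are Sidon spaces satisfying the two-space Sidon property. By \cref{thm:Sidon}, $\Orb(U)$ and $\Orb(V)$ are full-length orbits, each of size $(q^n-1)/(q-1)$ and each with minimum distance $2k-2$. Applying (a) with $\alpha=\gamma$ gives $\dim(U\cap\gamma V)\leq 1<k$ for every $\gamma\in\F_{q^n}^*$, so $U\neq\gamma V$ for any such $\gamma$; equivalently $\Orb(U)\cap\Orb(V)=\0$, and hence $|\Orb(U)\cup\Orb(V)|=2(q^n-1)/(q-1)$. To lower bound the cross-orbit distance, note that for any $\alpha,\beta\in\F_{q^n}^*$,
\[
\d(\alpha U,\beta V)=\d(U,\alpha^{-1}\beta V)=2k-2\dim(U\cap\alpha^{-1}\beta V)\geq 2k-2,
\]
using~(a) once more. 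Combined with the within-orbit distance $2k-2$ from \cref{thm:Sidon}, this yields $\ds(\Orb(U)\cup\Orb(V))=2k-2$.

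The main technical step is the equivalence itself; the two implications are short but require correctly identifying the dictionary between intersecting pairs in $U\cap\alpha V$ and the multiplicative relation $ab=cd$. The consequence is then a clean bookkeeping exercise using \cref{thm:Sidon}.
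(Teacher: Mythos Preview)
Your proof is correct. Note, however, that the paper does not supply its own proof of this lemma: it is quoted verbatim from \cite[Lemma~36]{RRT18} and stated without argument, so there is no in-paper proof to compare against. Your direct argument for the equivalence (pairing linearly independent elements of $U\cap\alpha V$ with the multiplicative identity $ab=cd$, and conversely reading off $\alpha=ad^{-1}=cb^{-1}$ to force $a,c$ into a common one-dimensional intersection) is the standard one and matches the spirit of the reference; the derivation of the consequence from \cref{thm:Sidon} and part~(a) is exactly how the paper uses the lemma downstream (cf.\ the proof of \cref{thm:DistDistribUnionTwoSpaceSidon}).
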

	
	Comparing the above definition with \cref{def:Sidon} one may wonder whether the assumption $ab=cd$ should also allow for the conclusion $\ol{a}=\ol{d}$ and $\ol{c}=\ol{b}$,
	which is an option in the case where $a,b,v,d\in U\cap V$.
	However, this is not necessary.
	The difference between the above lemma and the situation in Theorem~\ref{thm:Sidon}  lies in the obvious fact that the property $\dim(U\cap\alpha V)\leq 1$ for all $\alpha\in\F_{q^n}^*$
	can never be true if $V=U$.
	For further details we refer to the proofs of \cref{thm:Sidon} and \cref{lem:2spaceSidon} in \cite{RRT18}.

	We can use the above lemma to extend our earlier results to cyclic codes that are unions of cyclic orbits generated by Sidon spaces that pairwise have the two-space Sidon property.
	Such codes have maximal possible length and distance $2k-2$.
	Their existence has been established in \cite[Construction 37]{RRT18}, where the authors give a construction as a generalization of their own \cite[Construction 15]{RRT18}.
	Another construction from the same paper, \cite[Construction 11]{RRT18}, can be generalized in the same way to give another class of cyclic orbit codes of this type.
	
	We have now all of the necessary pieces to describe the intersection distribution of such codes.
	
	\begin{theorem}\label{thm:DistDistribUnionTwoSpaceSidon}
		Let $U_1,\ldots,U_m \in \cG_q(k,n)$ be distinct subspaces such that each $U_i$ is a Sidon space and each pair $U_i, U_j$ with $i\neq j$ has the two-space Sidon property.
		Let $\cC = \bigcup_{i=1}^m \Orb(U_i).$
		Then $\cC$ is a cyclic subspace code with $|\cC|= m(q^n-1)(q-1)^{-1}$ and $\ds(\cC)=2k-2$.
		Further $\cC$ has intersection distribution $(\lambda_0,\lambda_1)$ where	
		\begin{align*}
		\lambda_1&=m\left( \frac{q^k-1}{q-1}\right) \left( \frac{q^k-q}{q-1}\right)  +\binom{m}{2}\left(\frac{q^k-1}{q-1}\right)^2,\\
		\lambda_0&=m\left(\frac{q^n-1}{q-1}-Q-1\right)+\binom{m}{2} \left(\frac{q^n-1}{q-1}-\widehat{Q}\right).
		\end{align*}
	\end{theorem}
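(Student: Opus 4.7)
The plan is to decompose the intersection distribution of $\cC$ according to the $\binom{m+1}{2}$ pairs $(j,j')$ with $j\leq j'$, handling the diagonal terms ($j=j'$) with \cref{theo:SidonCount} and the off-diagonal terms ($j<j'$) with \cref{cor:CountMultOrbits}.

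First I would establish the two preliminary claims. For cardinality, note that each $\Orb(U_i)$ has full length $(q^n-1)/(q-1)$ by \cref{thm:Sidon}, and the orbits are pairwise disjoint: if $U_j=\alpha U_{j'}$ for some $j\neq j'$ and some $\alpha$, then $\dim(U_j\cap \alpha U_{j'})=k\geq 2$, contradicting \cref{lem:2spaceSidon}(a). Hence $|\cC|=m(q^n-1)/(q-1)$. For the minimum distance, within each orbit $\ds(\Orb(U_i))=2k-2$ by \cref{thm:Sidon}, while for $i\neq j$ \cref{lem:2spaceSidon}(a) gives $\dim(U_i\cap \alpha U_j)\leq 1$, so $\d(U_i,\alpha U_j)\geq 2k-2$. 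Combining, $\ds(\cC)=2k-2$ and the maximum intersection dimension is $\ell=1$.

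To compute $\lambda_1(\cC)=\sum_{j\leq j'}|\cL_1(U_j,U_{j'})|$, I would treat the two kinds of terms separately. For $j=j'$, $|\cL_1(U_j,U_j)|$ is the single-orbit value $\lambda_1(\Orb(U_j))$, which by \cref{theo:SidonCount} equals $Q=\frac{q^k-1}{q-1}\cdot\frac{q^k-q}{q-1}$. For $j<j'$, since $\Orb(U_j)\neq\Orb(U_{j'})$ and only $\cL_0(U_j,U_{j'})$ and $\cL_1(U_j,U_{j'})$ are nonempty, \cref{cor:CountMultOrbits} reduces to $\widehat{Q}=|\cL_1(U_j,U_{j'})|$, giving $|\cL_1(U_j,U_{j'})|=\widehat{Q}=\big(\frac{q^k-1}{q-1}\big)^2$. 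Summing over the $m$ diagonal pairs and the $\binom{m}{2}$ off-diagonal pairs yields the claimed formula for $\lambda_1$.

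Finally, I would compute $\lambda_0(\cC)$ by complementation within $\bP(\F_{q^n})$. On the diagonal, $\dim(U_j\cap \alpha U_j)\in\{0,1,k\}$, and the dimension equals $k$ precisely when $\ol{\alpha}\in\cS(U_j)=\{\ol{1}\}$ (since $\Stab(U_j)=\F_q^*$), so $|\cL_0(U_j,U_j)|=(q^n-1)/(q-1)-Q-1$. Off the diagonal, $\dim(U_j\cap \alpha U_{j'})\in\{0,1\}$ for $j<j'$ (orbits disjoint and two-space Sidon), so $|\cL_0(U_j,U_{j'})|=(q^n-1)/(q-1)-\widehat{Q}$. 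Summing gives the stated expression for $\lambda_0$. The main thing to be careful about is the bookkeeping for the $\ol{1}$ contribution in the diagonal terms versus its absence in the off-diagonal terms (since $U_j\neq \alpha U_{j'}$ for $j<j'$); otherwise, all the heavy lifting has already been done in \cref{sec:fractions} and \cref{prop:psiPreimageSizeMultOrbits,cor:CountMultOrbits}.
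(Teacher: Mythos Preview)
Your proposal is correct and follows essentially the same approach as the paper: decompose $\lambda_i(\cC)$ into diagonal and off-diagonal contributions, apply \cref{theo:SidonCount} for the former and \cref{cor:CountMultOrbits} (specifically~\eqref{eq:QsumMultOrbits} with $\ell=1$) for the latter, then compute $\lambda_0$ by complementation in $\bP(\F_{q^n})$ with the $\ol{1}$ term present only on the diagonal. Your explicit justification that the orbits are disjoint (via $\dim(U_j\cap\alpha U_{j'})=k\geq 2$ contradicting \cref{lem:2spaceSidon}) is slightly more detailed than the paper's, but otherwise the arguments coincide.
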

	
	\begin{proof}
		Clearly $\cC$ is a cyclic subspace code (in the sense of the paragraph before Definition~\ref{def:COC}).
		Each orbit has size $(q^n-1)(q-1)^{-1}$ since it is generated by a Sidon space, and because each pair has the two-space Sidon property, \cref{lem:2spaceSidon} implies that the orbits are disjoint.
		Hence $|\cC|= m(q^n-1)(q-1)^{-1}$.
		As for the minimum distance note that on the one hand $\min\{\d(U_j, \alpha U_{j'}) \mid j<j',\,\alpha \in \F_{q^n}^* \} \geq 2k-2$ thanks to
		\cref{lem:2spaceSidon}  while on the other hand each orbit itself has distance $2k-2$ by \cref{thm:Sidon}.
		
		For the intersection distribution define $\lambda_{i,j,j'}=|\cL_i(U_j,U_{j'})|$ for $i=0,1$ and $1\leq j\leq j'\leq m$.
		Recall that $\lambda_i=\sum_{j\leq j'}\lambda_{i,j,j'}$.
		Since each $U_j$ is a Sidon space, \cref{theo:SidonCount} gives us
		\[
		\lambda_{1,j,j}=Q=\left( \frac{q^k-1}{q-1}\right)\left( \frac{q^k-q}{q-1}\right).
		\]
		For $j<j'$, \cref{eq:QsumMultOrbits} gives
		\[
		\lambda_{1,j,j'}=\widehat{Q}=\left( \frac{q^k-1}{q-1} \right)^2.
		\]
		Now the statement for~$\lambda_1$ follows from the fact that there are~$m$ distinct orbits and $\binom{m}{2}$ pairs thereof.

		It remains to compute~$\lambda_0$. For each of the~$m$ orbits we have $\lambda_{0,j,j}=(q^n-1)/(q-1)-1-\lambda_{1,j,j}$; see also \cref{theo:SidonCount}.
		On the other hand, the intersection distribution in \cref{def:DistDistrMultOrbits} takes $(q^n-1)/(q-1)$ intersections between distinct orbits into account.
		Hence for each of the $\binom{m}{2}$ pairs of distinct orbits we have $\lambda_{0,j,j'}=(q^n-1)/(q-1)-\lambda_{1,j,j'}$.
		Now the result for $\lambda_0$ follows.
	\end{proof}
	
	\section{Conclusion and Open Problems}\label{sec:futureWork}
	
	In this paper we investigated the intersection distribution, and thus the distance distribution, for cyclic orbit codes that have maximum possible length and distance at least $2k-4$.
	For distance $2k-2$ the intersection distribution can be fully described and in fact depends only on $q,n,k$,
	while for distance $2k-4$ the additional parameter~$f=f(U)$ plays a role. Many cases remain to be investigated.
	We conclude with some specific open problems and directions for future work.
	
	\begin{arabiclist}
		\item Throughout our work, the parameter $f=|\cF(U)|$ plays a prominent role. Can we provide more information about~$f$ for more general subspaces?
        For instance, can we find a lower bound on~$f$ that guarantees distance $2k-4$?
       The question of how many fractions a subspace has may also be related to questions raised in \cite{RRT18} about the size of the ``product'' space $U^2=\langle \sum_{i=1}^n u_iv_i \mid u_i,v_i \in U \rangle$.
		
		\item Can we determine for given parameters $(q,n,k)$ the range for~$r$ (or~$\lambda_2$)  in \cref{thm:Dist2k4L2Structure}?  \cref{tab:rValues,tab:rValuesExhaustive} in \cref{sec:GeneralOrbits} show that the upper bound for $\lambda_2$ given in \cref{cor:2k-4DistIneq} is in most cases very poor.
				
		\item Our main tool for proving \cref{thm:Dist2k4L2Structure} was a detailed study of intersections $U\cap\alpha U$
		 of maximal dimension~$\ell=2$.  However, to find the intersection distribution for $\ell \geq 3$, studying intersections of maximal dimension is insufficient. What can we say about intersections that are not of maximal dimension?

	    \item Can \cref{thm:Dist2k4L2Structure} be generalized to cyclic subspace codes with multiple orbits and distance $2k-4$?
	\end{arabiclist}
	

\end{document}